\newcommand{\pref}{\succsim}
	\newcommand{\spref}{\ensuremath{\succ}}
	\newtheorem{corollary}{Corollary}
	\newtheorem{remark}{Remark}
	\newtheorem{example}{Example}
\newtheorem{definition}{Definition}
\newtheorem{proposition}{Proposition}
\newcommand{\myOmit}[1]{}
\newcommand{\midd}{\mathbin{:}}
\title{Proportionally Representative Participatory Budgeting\\ with Ordinal Preferences}
\author{Haris Aziz and Barton E. Lee\\UNSW Sydney and Data61 CSIRO\\
    Sydney, Australia\\
       haziz@cse.unsw.edu.au,  barton.e.lee@gmail.com}
\begin{document}

\maketitle

   \begin{abstract}
   Participatory budgeting (PB) is a democratic paradigm 
   whereby voters decide on a set of projects to fund with  a limited budget. We consider PB in a setting where voters report ordinal preferences over projects and have (possibly) asymmetric weights. We propose proportional representation axioms and clarify how they fit into other preference aggregation settings, such as multi-winner voting and approval-based multi-winner voting. As a result of our study, we also discover a new solution concept for approval-based multi-winner voting, which we call Inclusion PSC (IPSC). IPSC is stronger than proportional justified representation (PJR), incomparable to extended justified representation (EJR), and yet compatible with EJR. The well-studied Proportional Approval Voting (PAV)  rule produces a committee that satisfies both EJR and IPSC; however, both these axioms can  also be satisfied by an algorithm that runs in polynomial-time. 
   \end{abstract}

   \section{Introduction}

   Participatory budgeting (PB) provides a grassroots and democratic approach to selecting a set of public projects to fund within a given budget~\citep{AzSh20a}. 
   It has been deployed in several cities all over the globe~\citep{Shah07a}.
   In contrast to standard political elections, PB requires consideration of the (heterogeneous) costs of projects and must respect a budget constraint. When examining PB settings formally, standard voting axioms and methods that ignore budget constraints and differences in each project's cost need to be  reconsidered. In particular, it has been discussed in policy circles that the success of PB partly depends on how well it provides representation to minorities~\citep{BRTK03a}. We take an axiomatic approach to the issue of proportional representation in PB.

   In this paper, we consider PB with \emph{weak ordinal preferences}. Ordinal preferences provide a simple and natural input format whereby participants  rank candidate projects and are allowed to express indifference. A special class of ordinal preferences are
   dichotomous preferences (sometimes referred to  as approval ballots); this input format is  used in most real-world applications  of PB. However, in recent years, some PB applications have shifted to requiring linear order inputs. For example, in the New South Wales state of Australia, participants are asked to provide a partial strict ranking over projects.\footnote{\url{https://mycommunityproject.service.nsw.gov.au}} The PB model we consider encompasses both approval ballots and linear order inputs. 

   In most of the PB settings considered, the participants are assumed to have the same weight. However, in many scenarios, symmetry may be violated. For example, in liquid democracy or proxy voting settings, a voter could be voting on behalf of several voters so may have much more voting weight. Similarly, asymmetric weights may naturally arise if PB is used in   settings where voters have contributed different  amounts to a collective budget {or voters are affected by the PB outcome to different extents.} Therefore, we consider PB where voters may have asymmetric  weights. 

   While there is much discussion on fairness and representation issues in PB, there is a critical need to formalize reasonable axioms to capture these goals. We present two new axioms that relate to the proportional representation axiom, \emph{proportionality for solid coalitions (PSC)},  advocated by Dummett for multi-winner elections~\citep{Dumm84a}. 
{PSC has been referred to as ``a sine qua non for a fair election rule''\citep{Wood94a}
and the essential feature of a voting rule that makes it a system of proportional representation~\citep{Tide95a}.} { We use the key ideas underlying PSC to design new axioms for PB settings.} 
Our axioms provide yardsticks against which existing and new rules and algorithms can be measured. We also provide several justifications for our new axioms. 

   \begin{table*}[h]
   \begin{center}
 	   \scalebox{0.9}{
       \begin{tabular}{  l  l  l }
      \toprule
    & Approval Ballots & Ordinal Prefs \\ \midrule
    Divisible &\citep[e.g.][]{BMS05a} & \citep[e.g.][]{AzSt14a}  \\ \midrule
     Indivisible & \citep[e.g.][]{GKS+19a} & \textbf{This paper}  \\ \bottomrule
       \end{tabular} 
       }
   \end{center}
   \caption{Classification of the literature on fair participatory budgeting with ordinal preferences.}
   \label{table:lit}
   \end{table*}

   \paragraph{Contributions}

   We formalize the setting of PB with weak ordinal preferences. Previously, only restricted versions of the setting, such as PB with approval ballots, have been axiomatically studied~\citep{ALT18a}. We then propose two new axioms \emph{Inclusive PSC (IPSC)} and \emph{Comparative PSC (CPSC)} that are meaningful proportional representation and fairness axioms for PB with ordinal preferences. In contrast to previous fairness axioms for PB {with} approval ballots~\citep[{see, e.g.,}][]{ALT18a}, both IPSC and CPSC imply exhaustiveness ({i.e.,} no additional candidate can be funded without exceeding the budget limit).

   We show that an outcome satisfying Inclusive PSC is always guaranteed to exist and can be computed in polynomial time. The concept appears to be the ``right'' concept for several reasons. 
  {First,} 
  it is stronger than the \linebreak {local-BPJR-L} concept proposed for PB when voters have dichotomous preferences~\citep{ALT18a}.  {Second,} it is also stronger than \emph{generalised PSC} for multi-winner voting with ordinal preferences~\citep{AzLe19a}. {Third,} {when voters have dichotomous preferences,}{ it implies the well-studied concept PJR for multi-winner voting,  is incomparable to the EJR axiom~\citep{AEH+18}}{, and yet is compatible with EJR.} {In particular, the well-studied proportional approval voting rule (PAV) computes an outcome that satisfies both IPSC and EJR; however, there also exists polynomial-time algorithms that can achieve this.} Even for this restricted setting, it  is of independent interest. 
   To show that there exists a polynomial-time algorithm to compute an outcome satisfying IPSC, we present the \emph{PB Expanding Approvals Rule (PB-EAR)} algorithm.

   We also show that the CPSC is equivalent to the generalised PSC axiom for multi-winner voting with weak preferences, to Dummett's PSC axiom for multi-winner voting with strict preferences, and to PJR for multi-winner voting with dichotomous preferences.

   \section{Related Work}

   							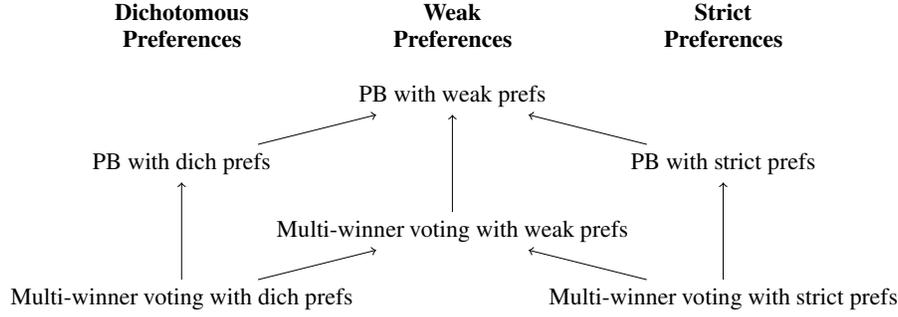
\begin{figure*}[t]
	   								\begin{center}
	  \scalebox{0.9}{ 						\begin{tikzpicture}
	   							\tikzstyle{pfeil}=[->,>=angle 60, shorten >=1pt,draw]
	   							\tikzstyle{onlytext}=[]

	   \node        (weak) at (0,10) {\begin{tabular}{c} \textbf{Weak}\\ \textbf{Preferences} \end{tabular}};

	   \node        (strict) at (4,10) {\begin{tabular}{c} \textbf{Strict}\\ \textbf{Preferences} \end{tabular}};

	   \node        (strict) at (-4,10) {\begin{tabular}{c} \textbf{Dichotomous}\\ \textbf{Preferences} \end{tabular}};


	   			\node        (PBordinal) at (0,9) {PB with weak prefs};

	   						\node        (PBdich) at (-4,8) {PB with dich prefs};

	   						\node        (PBstrict) at (4,8) {PB with strict prefs};

	   				\node        (MWVordinal) at (0,7) {Multi-winner voting with weak prefs};

	   							\node        (MWVdich) at (-4,6) {Multi-winner voting with dich prefs};

	   							\node        (MWVstrict) at (4,6) {Multi-winner voting with strict prefs};

	   		\draw[->] (PBdich) -- (PBordinal);
	   		\draw[->] (PBstrict) -- (PBordinal);
	   		\draw[->] (MWVordinal) -- (PBordinal);
	   		\draw[->] (MWVdich) -- (MWVordinal);
	   		\draw[->] (MWVstrict) -- (MWVordinal);

	   		\draw[->] (MWVdich) -- (PBdich);

	   		\draw[->] (MWVstrict) -- (PBstrict);

	   	%
	   	%
	   	%
	   	%
	   	%

	   						\end{tikzpicture}
						}
	   						\end{center}
	   						\caption{Relations between properties. 
	   						 An arrow from (A) to (B) denotes that (B) is more general than (A).}
	   						 \label{fig:rel}
	   						\end{figure*}

   PB with ordinal preferences can be classified across different axes.
   One axis concerns the input format. Voters either express dichotomous preferences or general weak or linear orders. Along another axis, either the projects are divisible or indivisible. When the inputs are dichotomous preferences, there has been work both for divisible~\citep{BMS05a,ABM19a} as well as indivisible projects~\citep{ALT18a,FT19}. When the input concerns rankings, then there is work where the projects are divisible~\citep[{see, e.g.,}][]{AzSt14a,AAC+19a}. Some of the work is cast in the context of probabilistic voting but is mathematically equivalent to PB for divisible projects. 

 To the best of our knowledge, fairness axioms for PB for discrete projects have not been studied deeply when the input preferences are general ordinal preferences. Therefore, this paper addresses an important gap in the literature.    Table~\ref{table:lit} provides a classification of the literature.

{\citet{ALT18a}, \citet{FT19}, and \citet{BBS20a} focused on PB with discrete projects where the input preference format is approval ballots. We show that our general axioms have connections with proportional representation axioms proposed by \citet{ALT18a} for the case of approval-ballots. We will also show how our approach has additional merit even for the case of approval-ballots. For example, in contrast to previously proposed axioms in~\citet{ALT18a}, our axioms imply a natural property called exhaustiveness.}

   \citet{FST+17a} consider the discrete PB model and study the computational complexity of maximizing various notions of social welfare, including Nash social welfare. {\citet{BNSP17} study issues surrounding preference elicitation in PB with the goal of maximizing utilitarian welfare.}  {In their model, they also consider input formats in which voters express ordinal rankings. However, their focus is not on proportional representation.} 
   \citet{FGM16a} considered PB both for divisible settings as well as discrete settings. However, their focus was on cardinal utilities. In particular, they focus on a demanding but cardinal-utility centric concept of core fairness. 
   Our ordinal approach caters to many settings in which voters only express rankings over projects. Other works on cardinal utilities include~\citet{FMS18a} {and}~\citet{BDG18a}.    In recent work, \citet{REH20a} study an end-to-end model of participatory budgeting and focus primarily on strategic behaviour.


   The paper is also related to a rapidly growing literature on multi-winner voting \citep{ABC+16a,FSST17a,AEF+17b,EFSS17a,Jans16a,Schu02a,Tide06a}. 
   PB is a strict generalization of multi-winner voting.
   Our axiomatic approach is inspired by the PSC axiom in multi-winner voting. The axiom was advocated by \citet{Dumm84a}.
   PSC has been referred to as the most important requirement for proportional representation in multi-winner voting~\citep{Wood94a,Wood97a,TiRi00a,Wood94a,Tide95a}.
 Figure~\ref{fig:rel} provides an overview of which model reduces to which other model.
   We dedicate a separate section to multi-winner voting because one of our axioms gives rise to a new and interesting axiom for the restricted setting of multi-winner voting.



   \section{Preliminaries}


   A PB setting is a tuple $(N,C,\pref,b,w,L)$ where $N$ is the set of $n$ voters, $C$ is the set of candidate projects (candidates), and $L$ is the total \emph{budget limit}.  In the context of PB, it makes sense to refer to $C$ as the set of projects. However, we will also refer to them as candidates especially when making connections with multi-winner voting.
 The function $w: C \rightarrow \mathbb{R^+}$ specifies the cost $w(c)$ of each candidate $c\in C$. We will more generally refer to $w(c)$ as the weight of the candidate project $c$. 
   The function $b: N \rightarrow \mathbb{R^+}$ specifies a \emph{voter weight} $b_i$ for each $i\in N$. 
   We assume that $\sum_{i\in N}b_i$ is $|N|$. For any set of voters $S\subseteq N$, we will denote $\sum_{i\in S}b_i$ by $b(S)$. 
Therefore $b(N)=n$.  {Abusing notation slightly,} for any set of candidates $C'\subseteq C$, we will denote $\sum_{c\in C'}w(c)$ by $w(C')$.  {An outcome, denoted by $W$, is a set of candidates.}
   A set of candidates {(or outcome)} $W\subseteq C$ is \emph{feasible} with respect to $L$ if $w(W)\leq L$. 
  {The preference profile $\pref$ specifies for each voter $i\in N$, her ordinal preference relation over $C$. In the terminology of \citet{BNSP17}, the input format can be viewed as `\emph{rank by value}' so that voters rank projects according to how they value them without taking costs into account.}

We write~$a \pref_i b$ to denote that voter~$i$ values candidate~$a$ at least as much as candidate~$b$ and use~$\spref_i$ for the strict part of~$\pref_i$, i.e.,~$a \spref_i b$ if and only if ~$a \pref_i b$ but not~$b \pref_i a$. Finally, $\sim_i$ denotes~$i$'s indifference relation, i.e., $a \sim_i b$ if and only if both~$a \pref_i b$ and~$b \pref_i a$.
   			 			The relation $\pref_i$ results in (non-empty) equivalence classes $E_i^1,E_i^2, \ldots, E_i^{m_i}$ for some $m_i$ such that $a\spref_i a'$ if and only if $a\in E_i^l$ and $a'\in E_i^{l'}$ for some $l<l'$. Often, we will use these equivalence classes to represent the preference relation of a voter as a preference list.
   					If each equivalence class is of size $1$, then the preference will be a called \emph{strict preference}. If for each voter, the number of equivalence classes is at most two, the preferences are referred to as \emph{dichotomous preferences}. When the preferences of the voters are dichotomous, the voters can be seen as approving a subset of voters. In this case, for each voter $i\in N$, the first equivalence class $E_i^1$ is also referred to as {an} \emph{approval ballot} and is denoted by $A_i\subseteq C$. Note that in this special case, where a voter $i$ has dichotomous preferences, the approval set $A_i$ contains all information about voter $i$'s preference. The vector $A=(A_1,\ldots, A_n)$ is referred to as the \emph{approval ballot profile}. If a voter is indifferent between all candidates, then voter $i$'s approval ballot could be interpreted to be either $A_i=\emptyset$ or $A_i=C$; our results and axioms are independent of this interpretation. 

   Multi-winner voting can be viewed a{s a} special kind of PB {setting} in which $w(c)=1$ for all $c\in C$ and $b_i=1$ for all $i\in N$. The budget limit $L$ is typically denoted by committee size $k$. Any setting that allows for weak preferences can be viewed as encapsulating the corresponding setting with approval ballots. The reason is that approval ballots can be viewed as dichotomous preferences. 

{It will be useful to distinguish between two types of PB outcomes: exhaustive and maximal cost outcomes. These concepts do not rely on the preferences of voters and, instead, are defined solely in terms of the cost of candidates, $w(c)$, and the budget, $L$.}

   \begin{definition}[Exhaustive outcomes]
   An outcome $W$ is said to be exhaustive w.r.t. $L$ if $w(W)\le L$ and $w(W\cup \{c\})>L$ for all $c\in C\backslash W$.
   \end{definition}

   \begin{definition}[Maximal cost outcomes]
   An outcome $W$ is said to be a maximal cost outcome w.r.t. $L$ if  {$W\in \arg\max_{C'}\{ w(C') \ : \ C'\subseteq C \text{ and } w(C')\le L\}$.}
   \end{definition}

   Note that  a maximal cost outcome is always exhaustive but an exhaustive outcome need not be maximal cost. In multi-winner voting, since we only consider outcomes that use up the budget limit of $k$, it means that all feasible outcomes  are both exhaustive and maximal cost. 


   \section{Proportional Representation in PB with Ordinal Preferences}\label{Section: Generalised PSC axioms}

{Before we develop and formally define our concepts, we give some simple examples to provide intuition behind our main ideas.}

{We first warm up with an example that captures the  \emph{proportionality for solid coalitions (PSC)} concept of \citet{Dumm84a}. The example concerns a context in which multi-winner voting coincides with PB.}
\begin{example}[Motivating example I]
{	Suppose there are 9 voters and 4 projects: $a, b, c, d$. The budget limit is 3 dollars and each project costs 1 dollar.
Hence, three projects are to be  selected.  Suppose the preferences of the voters are as follows.
	\begin{align*}
	1-6:& \quad a \succ b \succ c \succ d \ \\
	7-8:&\quad  d \succ c \succ b \succ a\\
	9:&\quad  c \succ a \succ b \succ d
	\end{align*}
	PSC requires that both $a$ and $b$ are selected among the three selected projects.} {Informally speaking,} {the} {rationale} {is that two-thirds of the voters most prefer $a$ and then $b$, and if they are assumed to have control over two-thirds of the budget, then they have the ability to afford both $a$ and $b$.}
\end{example}

 {Following the original PSC axiom for multi-winner elections, our concepts  are based on the idea that if a group of voters is large, and cohesively most prefers a certain set of projects,\footnote{I.e., there is a set of projects that all voters of the group unanimously prefer to all other projects; as will be shown below, this does not require voters to have perfectly aligned preferences.} then sufficient funding should be given to projects within the set. }

\begin{example}[Motivating example II]
{Let voter preferences be }
	\label{example:1}
	\begin{align*}
	1-30:& \quad a \succ b \succ c \succ d\\
	31-100:&\quad  d \succ c \succ b \succ a.
	\end{align*}
 Suppose the total budget limit is $100$, and the weights of the projects are  $w(a)=50$, $w(b)=30$, $w(c)=30$, $w(d)=40$.
The first group of voters (1-30) have $30/100$ of the voter population size. {Our  concepts can be motivated by supposing that  all the voters have equal control of the budget. Thus, the first group of voters can be viewed as controlling 30 units of the total budget limit of 100. However, these voters cannot ``afford" their most preferred project{,} $a${,} as its weight of 50 is more than 30 units of the budget that they control. Yet, the first group of voters' second most preferred project{,} $b${,} is affordable, having weight of only 30 units. Accordingly, the first group of voters can be thought of as having a justified demand that a project no worse than their second-most preferred project is selected, i.e., either project $a$ or $b$. By a similar argument, the second group of voters have a justified demand that both project $d$ and $c$ are selected, since  $w(c)+w(d)\leq 70$. {However, they do not have a justified demand that projects $d, c$ and $b$ are selected, since $w(c)+w(d)+w(b)>70$.}} {Notice that a key difference between multi-winner elections and  the PB setting is that projects may have heterogeneous weights.
}
	\end{example}

The concepts become more complicated when ties are considered in the preference lists.
\begin{example}[Motivating example III]
Consider a modification of Example~\ref{example:1} such that  the  first group of voters are indifferent between $b$ and $c$ as follows.
	\begin{align*}
	1-30:& \quad a \succ b \sim c \succ d,\\
	31-100:&\quad  d \succ c \succ b \succ a.
	\end{align*}
	Then, the voters in the first group would not care if $c$ is selected or {$b$} is selected.
	\end{example}
More generally, our concepts do not require voters in a single group to have perfectly aligned preferences. 

\begin{example}[Motivating example IV]
Consider a modification of Example~\ref{example:1} such that  the  first group of voters are split into two subgroups as follows. 
		\begin{align*}
	1-15:& \quad a \succ b \succ c \succ d\\
	16-30:& \quad b \succ a \succ c \succ d\\
	31-100:&\quad  d \succ c \succ b \succ a.
	\end{align*}
	In this case, the first group (1-15) and the second group (16-30) of voters do not agree on which project is most preferred but they are cohesive in the sense that they unanimously agree that the two-most preferred projects are $a$ and $b$. Since none of the groups can afford their respective most-preferred project with the budget they control,  our concepts require that these two groups are allowed to combine their budgets to make a justified demand for either project $a$ or $b$.
	\end{example}
	
Reasoning about proportional representation become{s}, yet again,  more complicated  when a group of voters can be combined with many different groups. 

\begin{example}[Motivating example V]\label{example: 4}
{Let voter preferences be }
	\begin{align*}
	1-14:& \quad a \succ b \succ c \succ d\\
	15-30:& \quad a \succ c \succ b \succ d  \\
	31-100:&\quad c \succ a \succ b \succ d,
	\end{align*}
	 the total budget limit $100$, and the weights of the projects  $w(a)=90$, $w(b)=30$, $w(c)=80$, $w(d)=40$. Here, the second group of voters (15-30) share a most-preferred project {(project $a$)} with the first group of voters (1-14) but also share their two-most preferred projects{,} $a$ and $c${,} with the third group of voters (31-100). However,  the first and second group  combined  cannot afford project $a$, which has weight 90. Yet, the second and third group can afford project $c$, which has weight $80$.
	\end{example}

	{The last example highlights an additional and key challenge presented by the PB setting that is not present in the multi-winner setting. When groups of voters are combined, their justified demand for projects depends not only on the size of the groups (i.e., the size of the budget that they control), but also the weight of the projects that they prefer. The concepts that we  introduce and develop are flexible enough to capture all of the variants of the example described above. }

{ }{Before presenting our main concepts in the next section, we introduce  the notion of a generalised solid coalition and some technical notation.} {The notion of a generalised solid coalition is central to the PSC of Dummett axioms~\citep{Dumm84a} and the related concepts that we develop.  Intuitively, a set of voters $N'$ forms a generalised solid coalition for a set of candidate projects $C'$ if every voter in $N'$ weakly prefers every candidate project in $C'$ to any candidate project outside of $C'$. Importantly, voters that form a generalised solid coalition for a candidate-project-set $C'$ are not required to have identical preference orderings over candidate projects within $C'$ nor $C\backslash C'$. }

   \begin{definition}[Generalised solid coalition]
   {Suppose voters have weak preferences.} A set of voters $N'$ is a \emph{generalised solid coalition} for a set of candidates $C'$ if every voter in $N'$ weakly prefers each candidate in $C'$ to each candidate in $C\backslash C'$. That is, for all $i\in N'$ and for any $c'\in C'$,
   $\forall c\in C\backslash C' \quad c'\pref_i c. $
   The candidates in $C'$ are said to be \emph{solidly supported} by the voter set $N'${, and conversely the voter set $N'$ is said to \emph{solidly support} the candidate set $C'$.}
   \end{definition}
   
   {Like Dummett's PSC axioms~\citep{Dumm84a}, our axioms will capture intuitive features of proportional representation by ensuring that minority groups of voters are represented in the PB outcome so long as they share similar preferences over candidates, i.e., they form a generalised solid coalition, and the amount of representation given to a group of voters that form a generalised solid coalition is (approximately) in proportion to their size.}

   {Lastly, we introduce some technical notation and terminology that is required for our main concepts.} Let $c^{(i,j)}$ denote voter $i$'s $j$-th most preferred candidate {or one such candidate if indifferences are present}. {To attain such a candidate $c^{(i,j)}$ in the presence of indifferences the following procedure can be used: (1) break all ties in voter $i$'s preferences temporarily to get an artificial strict order and (2) identify the $j$-th candidate $c^{(i,j)}$ in the artificial strict order.
   If a set of voters $N'$ supports a set of candidates $C'$, we will refer to  $\{c\midd \exists i\in N' \text{ s.t. }c \pref_i c^{(i, |C'|)} \} \setminus C'$ as the \emph{periphery} of the set of candidates $C'$ with respect to voter set $N'$.


   \subsection{Main New Concepts}


   %

{
We now present our key concepts for proportional representaton. The concepts are inspired by the PSC concept that was proposed by \citet{Dumm84a} for multi-winner voting for strict preferences. 
The PSC concept requires that if a set of voters $N'$ solidly supports a set of candidates $C'$, then a proportional number of candidates should be selected from $C'$ especially if $C'$ is large enough.} 

{
Although the PSC is quite intuitive and natural, extending it for our general PB settings needs to be done with care. In particular, the presence of candidate weights, budget limits, and indifference cause several complications so we need to define the concepts for the general PB setting carefully. The concepts are based on the requirements put forth on the outcome $W$. Each requirement corresponds to set of voters $N'\subseteq N$ solidly supporting a set of candidates $C'$. Since these voters solidly support $C'$, the proportional representation concepts require that sufficient amount of weight in $W$ should come from either candidates in $C'$ or candidates in the \emph{periphery} of the set of candidates $C'$ with respect to voter set $N'$.\footnote{Allowing for the weight representation to come from the periphery is essential because otherwise even for multi-winner voting, an outcome satisfying the requirements may not exist. }

When formally defining these requirements of the weight composition of $W$, we also need to take care that voters in $N'$ do not require very heavy weight candidates to be included in the outcome. Another guiding principle while formalizing the concepts is that the existence of an outcome satisfying the concepts is not ruled out because of previous insights on subdomains of PB such as multi-winner voting. Next, we use the ideas mentioned above to formally introduce our first key solution concept. }


   \begin{definition}[Comparative PSC (CPSC) for PB with general preferences]\label{definition: CPSC for PB}
   A budget $W$ satisfies \emph{Comparative PSC (CPSC)} if 
   there exists no set of voters $N'\subseteq N$ such that $N'$ solidly supports a set of candidates $C'$ and there is a subset of candidates $C''\subseteq C'$ such that 
   $$w(\{c~\midd~ \exists i\in N' \text{ s.t. }c \pref_i c^{(i, |C'|)} \}\cap W) <w(C'') \le b(N')L/n.$$   
   \end{definition}

   %

   	The intuition for CPSC is that if a set of voters $N'$ solidly supports a subset $C'$ then it may \emph{start to think} that at least weight $b(N')L/n$ worth of candidates should be selected from $C'$ or its periphery especially if there is enough weight present. At the very least it should not be the case that there is a feasible subset of $C''\subseteq C'$ of weight at most $b(N')L/n$ but the weight of $\{c\midd \exists i\in N' \text{ s.t.}c \pref_i c^{(i, |C'|)} \}\cap W$ is strictly less.

   %
   %

   {Inclusion PSC is defined similarly to Comparative PSC.} 

   
   %
      
   \begin{definition}[Inclusion PSC for PB with general preferences]\label{definition: IPSC for PB}
   An outcome $W$ satisfies \emph{Inclusion PSC (IPSC)} if 
   there exists no set of voters $N'\subseteq N$ 
   who have a solidly supported set of candidates $C'$ such that there exists some candidate ${c^*}\in C' \setminus (\{c\midd \exists i\in N' \text{ s.t.}c \pref_i c^{(i, |C'|)} \}\cap W)$ such that 
   $$w({c^*}\cup (\{ c\midd \exists i\in N' \text{ s.t.}c \pref_i c^{(i, |C'|)} \}\cap W))\leq b(N')L/n.$$
%
%
   \end{definition}

   	The intuition for IPSC is that if a set of voters $N'$ solidly supports a subset $C'$ then it may \emph{start to think} that a weight $b(N')L/n$ should be selected from $C'$ or its periphery especially if there is enough weight present. At the very least it should not be the case that weight of $\{c\midd \exists i\in N' \text{ s.t.}c \pref_i c^{(i, |C'|)} \}\cap W$ does not exceed $b(N')L/n$ even if some unselected candidate in ${c^*}\in C'$ can be added to $\{c\midd \exists i\in N' \text{ s.t.}c \pref_i c^{(i, |C'|)} \}\cap W$.

	
   	For both IPSC and CPSC, we avoid violation if for $N'$ solidly supporting candidates in $C'$, the weight of $\{c\midd \exists i\in N' \text{ s.t.}c \pref_i c^{(i, |C'|)} \}\cap W$ is large enough. That is, we only impose representation requirements for sets of voters who solidly support a set of candidates. If, instead, representation requirements were enforced for all sets of voters, regardless of whether they solidly supported a set of candidates or not, then it may not be possible to satisfy either axiom. This observation has already been made in the context of multi-winner voting~{\citep[see, e.g.,][]{ABC+16a}}.  Similarly, both axioms focus on whether the weight $\{c\midd \exists i\in N' \text{ s.t.}c \pref_i c^{(i, |C'|)} \}\cap W$ is large enough. If we only care about the weight of $C'\cap W$, then, again, it can be impossible to satisfy the requirements for all solid coalitions~\citep{ABC+16a}.

	{Next, we show that IPSC and CPSC are independent. The intuition is as follows. CPSC is stronger than IPSC in one respect: it cares about the maximum weight of candidates that are preferred by a coalition of voters whereas IPSC cares about set inclusion. On the other hand, IPSC is stronger in the following respect. For a violation of CPSC, we restrict ourselves to a subset of the solidly supported set of candidates $C''\subseteq C'$. For a violation of IPSC, we even allow for inclusion of a candidate $c$ that is not in the set of solidly supported set of candidates.}

	   		\begin{proposition}\label{prop: CPSC and IPSC differ}
	   			For PB with ordinal preferences, IPSC does not imply CPSC and CPSC does not imply IPSC.
	   			\end{proposition}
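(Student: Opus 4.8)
The plan is to prove both non-implications by exhibiting explicit instances, each with a designated outcome $W$ that satisfies one axiom but violates the other.

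For the direction that IPSC does not imply CPSC, I would exploit Proposition~\ref{proposition: CPSC maximally exhaustive}: every CPSC outcome is maximal cost. Hence it suffices to produce an instance together with an IPSC outcome $W$ that is exhaustive but \emph{not} maximal cost, since such a $W$ cannot satisfy CPSC. A minimal witness uses a single voter and three candidates $a,c_1,c_2$ with $w(a)=3$, $w(c_1)=w(c_2)=2$, budget $L=4$, and the preference $a \succ c_1 \sim c_2$. Take $W=\{a\}$. Then $w(W)=3<4$ but adding either $c_1$ or $c_2$ overshoots $L$, so $W$ is exhaustive; it is not maximal cost because $w(\{c_1,c_2\})=4$. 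I would verify IPSC directly: the only solidly supported sets are $\{a\}$ and the sets containing $a$ together with part of the bottom tie class, and for each of them the represented weight plus the cost of any addable candidate exceeds the relevant quota. Since $W$ is not maximal cost, Proposition~\ref{proposition: CPSC maximally exhaustive} gives that $W$ violates CPSC.

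For the harder direction, that CPSC does not imply IPSC, the plan is to engineer a coalition $N'$ that witnesses an IPSC violation while every coalition (including $N'$ itself) still satisfies the comparative condition. The analytic guide is the observation that if CPSC holds yet IPSC fails on some coalition $(N',C')$ via an addable candidate $c$, then taking $C''=\{c\}\cup(C'\cap W)$ forces $w(c)\le w(\mathrm{periphery}\cap W)$; that is, the slack must be created by \emph{selected periphery} candidates. So I would build $C'$ with a selected periphery candidate $p$ whose cost exceeds that of a cheap unselected $c\in C'$. Concretely: four equal-weight voters, candidates $x,y,p,q$ with $w(x)=w(y)=w(q)=1$ and $w(p)=2$, budget $L=3$; voter~$1$ has $x\succ y\succ q\succ p$, while voters~$2,3,4$ have $x\sim y\sim p\succ q$; and $W=\{p,q\}$. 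The full coalition solidly supports $C'=\{x,y\}$ with periphery $\{p\}$, so the represented weight is $w(\{p\})=2$, below the quota $3$, yet $x$ can be added within quota ($w(\{x,p\})=3$), violating IPSC.

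Two design pressures must be balanced, and reconciling them is the main obstacle. First, the superset $\{x,y,p\}$ must not be a solid coalition, for otherwise one could take $C''=\{x,p\}$ of weight $3$ and obtain a CPSC violation; voter~$1$'s candidate $q$, ranked strictly between $\{x,y\}$ and $p$, is precisely what prevents $\{x,y,p\}$ from being solidly supported. Second, and more delicate, CPSC is very demanding on small sub-coalitions: a lone voter whose cheap favorite is unselected would itself witness a CPSC violation as soon as its quota reaches that favorite's cost. I would neutralize this by setting $L<n$ so that a single voter's quota $L/n$ is strictly below $1$, preventing voter~$1$ from demanding the unselected $x$ or $y$; simultaneously, every sub-coalition containing one of voters~$2,3,4$ has $p\in P\cap W$ already, so $p\in W$ meets its quota. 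The verification then reduces to the finitely many solid coalitions, grouped by how many of voters~$2,3,4$ are included and whether voter~$1$ is included, checking in each case that no feasible $C''\subseteq C'$ has weight in the window $(w(P\cap W),\,Q]$; the integer weights $1,2,3$ are chosen to sit away from the fractional quotas $0.75,1.5,2.25$, so this window is always empty. Finally, $w(W)=3=L$ confirms $W$ is maximal cost, consistent with Proposition~\ref{proposition: CPSC maximally exhaustive}.
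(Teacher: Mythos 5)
Your proposal is correct, and since a statement like this can only be proved by counterexamples, your overall strategy necessarily matches the paper's; still, your instances and verification route differ in ways worth noting. For the direction IPSC $\not\Rightarrow$ CPSC, the paper uses a four-voter dichotomous profile ($L=2$, $w(a)=w(c)=1$, $w(b)=0.9$, outcome $W=\{b,c\}$) and exhibits a violating coalition directly ($N'=\{1,2\}$ solidly supporting $\{a\}$), whereas you route the CPSC failure through Proposition~\ref{proposition: CPSC maximally exhaustive}: a one-voter IPSC outcome that is exhaustive but not maximal cost cannot satisfy CPSC. That shortcut is sound (it is exactly the contrapositive of that proposition), shrinks the instance to a single voter, and isolates precisely the feature the paper itself flags just before the statement, namely that an IPSC outcome need not be maximal cost. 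For the direction CPSC $\not\Rightarrow$ IPSC, your construction and the paper's rest on the same mechanism: a relatively expensive \emph{selected} periphery candidate keeps every comparative window empty (no feasible $C''\subseteq C'$ has weight strictly above the represented weight yet within the quota), while a cheap unselected candidate inside $C'$ triggers the inclusion violation. The paper realizes this with six candidates and fractional weights ($W=\{c,z\}$, periphery candidate $c$ of weight $0.9$, addable candidate $b$ of weight $0.1$); you use four candidates and integer weights, and your preliminary observation that any CPSC-consistent IPSC failure forces $w(c)\le w(\mathrm{periphery}\cap W)$ is a genuine addition that explains why both constructions must look this way. One caveat applies to you and to the paper alike: the verification of the axiom that \emph{holds} (IPSC in the first instance, CPSC in the second) is spelled out only for the critical coalition and sketched for the remaining ones, though your grouping of coalitions by quota together with the empty-window argument makes that residual check routine.
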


   Both IPSC and CPSC imply exhaustiveness as shown in the proposition below.

   \begin{proposition}[CPSC and IPSC are exhaustive]\label{proposition: CPSC and IPSC exhaustive}
   Any outcome $W$ that satisfies CPSC or IPSC is exhaustive.
   \end{proposition}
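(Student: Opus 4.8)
The plan is to prove the contrapositive of both implications at once: I would show that any feasible outcome $W$ that fails to be exhaustive must violate \emph{both} CPSC and IPSC. Since $W$ is an outcome we may assume it is feasible, so $w(W)\le L$; non-exhaustiveness then means there is a candidate $c^\ast\in C\setminus W$ with $w(W\cup\{c^\ast\})\le L$, and because costs are positive this forces $w(W)<L$. The entire argument then reduces to exhibiting a single generalised solid coalition that witnesses a violation of the relevant axiom.

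The key observation is that the \emph{trivial} coalition $N'=N$ supporting $C'=C$ is always a generalised solid coalition: the defining condition quantifies over $c\in C\setminus C'=\emptyset$ and so holds vacuously. With this choice we get $b(N')=b(N)=n$, so the threshold $b(N')L/n$ equals exactly $L$. Furthermore, since $|C'|=|C|$, the candidate $c^{(i,|C'|)}$ is a least-preferred candidate of voter $i$, so every $c\in C$ satisfies $c\pref_i c^{(i,|C'|)}$. Hence the set $\{c : \exists i\in N' \text{ s.t. } c\pref_i c^{(i,|C'|)}\}$ equals all of $C$, and its intersection with $W$ is simply $W$. Thus the ``selected weight'' term that appears in both axioms collapses to $w(W)$, and the first triggering condition $w(W)<b(N')L/n$ becomes precisely $w(W)<L$, which we have already established.

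It remains to supply the second witness demanded by each axiom. For CPSC I would take $C''=W\cup\{c^\ast\}\subseteq C=C'$; by the choice of $c^\ast$ this satisfies $w(W)<w(C'')\le L=b(N')L/n$, which is exactly the CPSC violation. For IPSC I would take the single candidate $c^\ast\in C'\setminus(\{c : \exists i\in N' \text{ s.t. } c\pref_i c^{(i,|C'|)}\}\cap W)=C\setminus W$; then $w(\{c^\ast\}\cup W)\le L=b(N')L/n$, which is exactly the IPSC violation. In either case $W$ fails the axiom in question, completing the contrapositive and hence the proposition.

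I expect no serious obstacle: the proof is essentially a matter of instantiating the definitions with the right coalition. The only points requiring care are the bookkeeping that $\{c : \exists i\in N' \text{ s.t. } c\pref_i c^{(i,|C'|)}\}$ really equals $C$ when $C'=C$ (which rests on $c^{(i,|C'|)}$ being a least-preferred candidate under the stated tie-breaking convention), and invoking the standing feasibility assumption so that $w(W)\le L$ is available at the outset.
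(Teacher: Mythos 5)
Your proof is correct. For the IPSC half it is essentially the paper's own argument: the paper also takes the trivial coalition $N'=N$ supporting $C'=C$, notes that the threshold becomes $b(N)L/n=L$ and that the relevant intersection collapses to $W$, and then uses the unselected candidate $c^\ast$ with $w(W\cup\{c^\ast\})\le L$ to exhibit the IPSC violation. Where you genuinely diverge is the CPSC half: the paper does not argue it directly in this proposition at all, but instead defers to the later, stronger Proposition~\ref{proposition: CPSC maximally exhaustive} (CPSC implies maximal cost), from which exhaustiveness follows since any maximal cost outcome is exhaustive. You instead give a direct, self-contained CPSC violation using the same trivial coalition, with the witness $C''=W\cup\{c^\ast\}\subseteq C=C'$ satisfying $w(W)<w(C'')\le L$. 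Your route buys uniformity and locality --- one coalition, instantiated twice, proves the whole proposition without forward references --- while the paper's route buys economy in the overall exposition, since it must prove the maximal-cost result anyway and then gets the CPSC case of this proposition for free. Both arguments rest on the same two points you flag: positivity of costs (so non-exhaustiveness gives the strict inequality $w(W)<L$) and the bookkeeping that $\{c \midd \exists i\in N \text{ s.t. } c\pref_i c^{(i,|C|)}\}=C$ when $C'=C$.
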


   %
   %
   %

   {CPSC implies the stronger maximal cost property. As will be shown within the proof of Proposition~\ref{prop: CPSC and IPSC differ},  an IPSC outcome need not be a  maximal cost outcome. }

   \begin{proposition}[CPSC implies maximal cost]\label{proposition: CPSC maximally exhaustive}
   Any outcome $W$ that satisfies CPSC  is a maximal cost outcome.
   \end{proposition}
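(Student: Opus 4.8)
The plan is to prove the contrapositive: I will assume that the feasible outcome $W$ is \emph{not} a maximal cost outcome and produce a witness showing that $W$ must then violate CPSC. (Throughout I read the maximal cost definition as requiring $w(W)\ge w(C')$ for every \emph{feasible} $C'\subseteq C$, i.e.\ every $C'$ with $w(C')\le L$, since otherwise the condition could never be satisfied.) Non-maximality therefore supplies a set $C''\subseteq C$ with $w(C'')\le L$ and $w(C'')>w(W)$, and this $C''$ is exactly the subset I will feed into the CPSC definition.

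The key move is to instantiate the CPSC quantifiers at the coarsest scale, taking $N'=N$ and $C'=C$. The full voter set $N$ solidly supports $C$ vacuously, since $C\setminus C=\emptyset$ makes the solidity requirement empty. Because $b(N)=n$, we get $b(N')L/n=L$. The step I expect to require the most care is the bookkeeping identity $\{c\midd \exists i\in N \text{ s.t.\ } c\pref_i c^{(i,|C|)}\}=C$: here $c^{(i,|C|)}$ is (one of) voter $i$'s least preferred candidates, so every candidate is weakly preferred to it, and the union over $i$ is all of $C$. Intersecting with $W$ then yields precisely $W$, so $w(\{c\midd \exists i\in N \text{ s.t.\ } c\pref_i c^{(i,|C|)}\}\cap W)=w(W)$.

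Finally I would assemble the contradiction. The first CPSC inequality becomes $w(W)<b(N')L/n=L$, which holds because $w(W)<w(C'')\le L$. The second requirement asks for a subset $C''\subseteq C'=C$ with $w(W)<w(C'')\le b(N')L/n$; this is met verbatim by the $C''$ from non-maximality, since $w(W)<w(C'')\le L$. Thus $(N',C')=(N,C)$ together with $C''$ witnesses a CPSC violation, contradicting the hypothesis that $W$ satisfies CPSC. The only genuine subtlety, besides pinning down the periphery-type set above, is the implicit restriction to feasible $C'$ in the maximal cost definition, which must be made explicit for the argument to go through; once both are in place the proof is essentially immediate.
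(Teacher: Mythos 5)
Your proof is correct and follows essentially the same route as the paper's: instantiate CPSC with $N'=N$ and $C'=C$ (solidly supported vacuously), note that $\{c\midd \exists i\in N \text{ s.t. } c \pref_i c^{(i,|C|)}\}\cap W = W$ and $b(N)L/n = L$, and take $C''$ to be a feasible set of weight exceeding $w(W)$ (the paper phrases this as $C''=W'$ for a maximal cost outcome $W'$). Your explicit remark that the paper's maximal cost definition must be read with the feasibility restriction $w(C')\le L$ is a fair and necessary clarification that the paper's own proof also uses implicitly.
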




   \subsection{Concepts with Approval Ballots}

We revisit our central concepts in the special but  well-studied domain of approval ballots. {We provide characterizations of both CPSC and IPSC when the voters have dichotomous preferences. At the end of this section, we show that these characterizations highlight connections between our axioms (CPSC, IPSC) and axioms that have previously been  established in the PB literature.}
{ The following proposition provides a characterization of CPSC in this domain.}

   %
   %

   \begin{proposition}[Comparative PSC (CPSC) for PB with approval preferences]\label{prop: dichotomous pref CPSC}
   Suppose voters have dichotomous preferences. An outcome $W$ satisfies \emph{Comparative PSC (CPSC)} if and only if {the following two conditions hold:}
   \begin{description}
   \item[(i)] there exists no set of voters $N'\subseteq N$ such that there is a subset of candidates $C''\subseteq \bigcap_{i\in N'} A_i$ such that $w(C'')\leq b(N')L/n$ but 
   $w(W\cap \bigcup_{i\in N'}A_i)< w(C'')$, {and}
   \item[(ii)] the outcome $W$ is a maximal cost outcome. 
   \end{description}
   \end{proposition}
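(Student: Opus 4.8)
The plan is to prove both implications of the equivalence, the real work being to translate the general definition of CPSC (Definition~\ref{definition: CPSC for PB}), phrased via generalised solid coalitions and peripheries, into the approval language of conditions \textbf{(i)} and \textbf{(ii)}. The starting observation is a characterisation of solid support under dichotomous preferences: since each voter $i$ has exactly the two equivalence classes $A_i$ and $C\setminus A_i$, a set $N'$ solidly supports $C'$ if and only if, for every $i\in N'$, either $C'\subseteq A_i$ or $A_i\subseteq C'$. Indeed, solid support fails for $i$ precisely when some $c'\in C'$ is unapproved while some approved candidate lies outside $C'$, i.e. when $C'\not\subseteq A_i$ and $A_i\not\subseteq C'$.

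The second and more delicate ingredient is to compute the periphery $P:=\{c\midd \exists i\in N' \text{ s.t }c \pref_i c^{(i, |C'|)}\}$ in this setting. For a dichotomous voter $i$, the candidate $c^{(i,|C'|)}$ is approved when $|C'|\le |A_i|$ and unapproved when $|C'|> |A_i|$; hence $\{c\midd c \pref_i c^{(i, |C'|)}\}$ equals $A_i$ in the first case and all of $C$ in the second. Combining this with the solid-support dichotomy yields two regimes for a coalition $N'$ supporting $C'$: either (a) every $i\in N'$ satisfies $C'\subseteq A_i$, whence $C'\subseteq\bigcap_{i\in N'}A_i$ and $P=\bigcup_{i\in N'}A_i$; or (b) some $i\in N'$ has $A_i\subsetneq C'$, whence $P=C$ and so $w(P\cap W)=w(W)$. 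This periphery lemma is the crux, and the step I expect to require the most care, especially at the boundary $|C'|=|A_i|$ (where $A_i=C'$) and for the degenerate ballots $A_i=\emptyset$ or $A_i=C$, which are handled consistently by the interpretation-independence noted in the preliminaries.

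With the lemma in hand, the forward direction is quick. Condition \textbf{(ii)} is immediate from Proposition~\ref{proposition: CPSC maximally exhaustive}. For \textbf{(i)} I argue contrapositively: a failure of \textbf{(i)} supplies $N'$ and $C''\subseteq\bigcap_{i\in N'}A_i$ with $w(C'')\le b(N')L/n$ but $w(W\cap\bigcup_{i\in N'}A_i)<w(C'')$; taking $C'=\bigcap_{i\in N'}A_i$ places us in regime (a), so $P=\bigcup_{i\in N'}A_i$, and the chain $w(P\cap W)<w(C'')\le b(N')L/n$ (which also forces $w(P\cap W)<b(N')L/n$) exhibits exactly a CPSC violation.

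For the converse I again argue contrapositively, starting from a CPSC violation: a coalition $N'$ solidly supporting some $C'$, together with $C''\subseteq C'$ satisfying $w(P\cap W)<w(C'')\le b(N')L/n$. In regime (a) we have $C''\subseteq C'\subseteq\bigcap_{i\in N'}A_i$ and $w(W\cap\bigcup_{i\in N'}A_i)=w(P\cap W)<w(C'')$, so \textbf{(i)} fails. In regime (b) we have $w(P\cap W)=w(W)$, and since $b(N')\le b(N)=n$ gives $b(N')L/n\le L$, the set $C''$ is feasible with $w(W)<w(C'')\le L$, so $W$ is not a maximal cost outcome and \textbf{(ii)} fails. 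Thus every CPSC violation contradicts \textbf{(i)} or \textbf{(ii)}, which together with the forward direction completes the equivalence.
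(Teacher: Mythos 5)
Your proposal is correct and follows essentially the same route as the paper's own proof: condition \textbf{(ii)} comes from Proposition~\ref{proposition: CPSC maximally exhaustive}, the forward direction for \textbf{(i)} builds a CPSC violation from a solid coalition contained in $\bigcap_{i\in N'}A_i$, and the converse uses the same case split on whether the periphery equals $\bigcup_{i\in N'}A_i$ or all of $C$ (the latter case contradicting maximal cost). The only cosmetic differences are that you isolate the solid-support dichotomy ($C'\subseteq A_i$ or $A_i\subseteq C'$) and the periphery computation as an explicit lemma, and you take $C'=\bigcap_{i\in N'}A_i$ in the forward direction where the paper takes $C'=C''$ itself.
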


   {We also obtain a characterization of IPSC under approval ballots.}

   \begin{proposition}[Inclusion PSC for PB with approval preferences]\label{prop: dichotomous pref IPSC}
   Suppose voters have dichotomous preferences. An outcome $W$ satisfies \emph{Inclusion PSC (IPSC)} if and only if   {the following two conditions hold:}
   \begin{description}
   \item[(i)] there exists no set of voters $N'\subseteq N$ such that
   $w(\cup_{i\in N'}A_i \cap W)< b(N')L/n$ and there exists some $c\in (\cap_{i\in N'}A_i)\setminus (\cup_{i\in N'}A_i\cap W)$ such that $w(\{c\} \cup (\cup_{i\in N'}A_i \cap W))\leq b(N')L/n$, {and}
   \item[(ii)] the outcome $W$ is exhaustive. 
   \end{description}
   \end{proposition}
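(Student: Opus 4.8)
The plan is to mirror the structure of the proof of \propref{prop: dichotomous pref CPSC} and establish both implications by contraposition, reducing the ordinal definition of IPSC (\defref{definition: IPSC for PB}) to the approval-ballot conditions (i) and (ii). Clause (ii) comes almost for free: by \propref{proposition: CPSC and IPSC exhaustive} every IPSC outcome is exhaustive, so whenever (ii) fails IPSC fails, and symmetrically in the backward direction I may assume $W$ is exhaustive and only need to derive a violation of (i).

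The technical heart is to pin down, for dichotomous preferences, the ``periphery'' set $\{c\midd \exists i\in N' \text{ s.t } c\pref_i c^{(i,|C'|)}\}$ appearing in \defref{definition: IPSC for PB}. First I would record the key observation: when voter $i$ has approval ballot $A_i$, the set $\{c\midd c\pref_i c^{(i,|C'|)}\}$ equals $A_i$ if $|C'|\le |A_i|$ and equals all of $C$ if $|C'|>|A_i|$ (this depends only on which equivalence class $c^{(i,|C'|)}$ lands in, so it is insensitive to the tie-breaking used to define $c^{(i,|C'|)}$, and it also handles the all-indifferent case $A_i=\emptyset$). Consequently, for a coalition $N'$ solidly supporting $C'$, the periphery intersected with $W$ is either $\bigcup_{i\in N'}A_i\cap W$ (when every $i\in N'$ has $|C'|\le|A_i|$) or all of $W$ (when some $i\in N'$ has $|C'|>|A_i|$). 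I would also note that solid support under dichotomous preferences forces, for each $i\in N'$, either $C'\subseteq A_i$ or $A_i\subseteq C'$.

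For the forward direction (IPSC $\Rightarrow$ (i) and (ii)), if (ii) fails we are done by \propref{proposition: CPSC and IPSC exhaustive}; if (i) fails, take the witnessing coalition $N'$ together with the candidate $c\in(\bigcap_{i\in N'}A_i)\setminus(\bigcup_{i\in N'}A_i\cap W)$, and set $C'=\bigcap_{i\in N'}A_i$, which is nonempty since it contains $c$. Then $N'$ solidly supports $C'$, and $C'\subseteq A_i$ gives $|C'|\le|A_i|$ for all $i\in N'$, so the periphery equals $\bigcup_{i\in N'}A_i$ exactly; substituting this into \defref{definition: IPSC for PB} reproduces an IPSC violation verbatim.

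For the backward direction ((i) and (ii) $\Rightarrow$ IPSC), I take a coalition $N'$ and set $C'$ witnessing an IPSC violation and assume $W$ is exhaustive. The main obstacle is the case where some $i\in N'$ has $|C'|>|A_i|$: then the periphery equals $C$ and the periphery intersected with $W$ equals $W$ itself. Here the second clause of the violation produces a candidate $c\notin W$ with $w(\{c\}\cup W)\le b(N')L/n\le L$, directly contradicting exhaustiveness; this is precisely where clause (ii) does the work. Ruling out this case forces $|C'|\le|A_i|$ for every $i\in N'$, whence (using the solid-support dichotomy) $C'\subseteq\bigcap_{i\in N'}A_i$ and the periphery intersected with $W$ collapses to $\bigcup_{i\in N'}A_i\cap W$, so the IPSC violation translates into a violation of (i). I expect the only delicate bookkeeping to be the two periphery cases above and the verification that the witnessing candidate satisfies $c\in\bigcap_{i\in N'}A_i$.
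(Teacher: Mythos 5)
Your proposal is correct and follows essentially the same route as the paper's proof: contraposition in both directions, invoking \propref{proposition: CPSC and IPSC exhaustive} for clause (ii), taking $C'=\bigcap_{i\in N'}A_i$ in the forward direction, and in the backward direction using exhaustiveness to rule out the case where some $i\in N'$ has $\{c\midd c\pref_i c^{(i,|C'|)}\}=C$, which forces the periphery to collapse to $\bigcup_{i\in N'}A_i$. Your only departures are presentational: you state the periphery characterization and the solid-support dichotomy ($C'\subseteq A_i$ or $A_i\subseteq C'$) as explicit upfront observations, which the paper instead argues inline.
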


   PB with approval ballots has been considered by \citet{ALT18a}. For example, they proposed the concept BPJR-L.
 {In the restricted setting studied by \citet{ALT18a},} CPSC for PB with approval preferences is equivalent to the combination of the B-PJR-L and the {\color{black}maximal cost} concepts.
   BPJR-L is weaker than CPSC because BPJR-L does not imply  {\color{black}maximal} cost. 

   %
   %

   \begin{remark}
   In the standard multi-winner setting, outcomes are required to have maximal cost (and hence are exhaustive). Thus, condition (ii) in Proposition~\ref{prop: dichotomous pref CPSC} and~\ref{prop: dichotomous pref IPSC} are always satisfied in the multi-winner setting. 
   \end{remark}
   %
   %
   %
   %
   %

 IPSC for PB with approval preferences is stronger than the Local-BPJR-L proposed by \citet{ALT18a}.

   \section{Computing Proportional Outcomes}

		{In this section, we focus on the computational aspects of proportionally representative outcomes.}
   		Our first observation is that computing a CPSC outcome is computationally hard, even for one voter. The reduction is from the knapsack problem.

   \begin{proposition}
   	Computing a CPSC outcome is weakly NP-hard even for the case of one voter.
   \end{proposition}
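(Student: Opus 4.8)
The plan is to reduce from the NP-complete \textsc{Subset-Sum} problem (given positive integers $a_1,\dots,a_m$ and a target $T$, decide whether some subset sums to exactly $T$), exploiting the fact that, by \propref{proposition: CPSC maximally exhaustive}, every CPSC outcome must be a maximal cost outcome. First I would construct a one-voter instance: take $C=\{c_1,\dots,c_m\}$ with $w(c_j)=a_j$, set the budget limit $L=T$, and let the single voter be \emph{indifferent} among all candidates. By the normalization $\sum_i b_i=|N|$ we have $b(\{1\})=n=1$, so the proportionality threshold is $b(N')L/n=L$ for the only nonempty coalition. The goal is to show that computing a CPSC outcome here is equivalent to solving the optimization version of \textsc{Subset-Sum}, i.e.\ finding a subset of maximum weight not exceeding $T$.

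The key step is the equivalence, \emph{for this one-voter instance}, between satisfying CPSC and being a maximal cost outcome. The forward direction is immediate from \propref{proposition: CPSC maximally exhaustive}. For the converse I would use indifference: for every generalised solid coalition $N'=\{1\}$ supporting any $C'$, the set $\{c\midd \exists i\in N' \text{ s.t }c \pref_i c^{(i, |C'|)} \}$ equals all of $C$ (the voter weakly prefers every candidate to every other), so $w(\{c\midd \exists i\in N' \text{ s.t }c \pref_i c^{(i, |C'|)} \}\cap W)=w(W)$. Hence, by \defref{definition: CPSC for PB}, a violation would require some feasible $C''\subseteq C'\subseteq C$ with $w(W)<w(C'')\le L$; if $W$ is maximal cost then $w(W)\ge w(C'')$ for every feasible $C''$, so no such witness exists and $W$ satisfies CPSC. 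In particular a maximal cost outcome always exists, so the constructed instance always admits a CPSC outcome.

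Finally I would close the reduction. Run any algorithm that computes a CPSC outcome $W$ on the constructed instance; by the equivalence above, $w(W)$ equals the maximum subset-sum of $a_1,\dots,a_m$ that does not exceed $T$. Therefore the original instance is a yes-instance precisely when $w(W)=T$, a condition checkable in polynomial time. Since \textsc{Subset-Sum} is NP-complete and the instance we build uses a single voter, it follows that computing a CPSC outcome is NP-hard even for one voter.

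The main obstacle is the bookkeeping around the general weak-preference form of CPSC when specialized to the indifferent voter: one must confirm that \emph{no} solid coalition $C'\subsetneq C$ can produce a stronger constraint (it cannot, since the periphery term is always all of $C$ while the candidate set $C''$ only shrinks with $C'$), and that the normalization indeed fixes the threshold at exactly $L$. Once these points are pinned down, the hardness reduces to that of knapsack/subset-sum. Should a strict-preference instance be preferred, the same weights under an arbitrary fixed ranking also work, but the single indifferent voter yields the cleanest argument.
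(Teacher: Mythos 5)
Your proposal is correct and follows essentially the same approach as the paper: a single voter indifferent among (equivalently, approving) all candidates, Proposition~\ref{proposition: CPSC maximally exhaustive} to force any CPSC outcome to be maximal cost, and the equivalence of maximal cost with knapsack/subset-sum. Your write-up is in fact slightly more careful than the paper's terse proof, since you also verify the converse (for this instance, maximal cost implies CPSC, so CPSC outcomes exist and the oracle's output genuinely decides \textsc{Subset-Sum}), a step the paper leaves implicit.
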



   {Next, we show that even for one voter with strict preferences, a CPSC outcome may not exist.}

   \begin{example}
   	Consider the following PB instance with one voter and 4 candidate projects. The voters' preferences are as follows. $	1: a \succ b \succ c \succ d$.  
	The limit $L$ is 4 and the weights are: $w(a)=3, w(b)=w(c)=w(d)=2$. CPSC requires that {project} $a$ must be selected. It also requires that $\{b,c\}$ should be selected. Therefore{,} a CPSC outcome does not exist.
	\end{example}

    Later, we will show that in a more restrictive setting (multi-winner approval voting) a CPSC outcome always exists, can be computed in polynomial-time, and coincides with a well-established proportional representation axiom, called PJR.

    

   %
   %
   %
   %

   							\begin{algorithm}[h!]
   								  \caption{PB Expanding Approvals Rule (PB-EAR)}
   								  \label{algo:newrule}
   							 %
   \normalsize
   								\begin{algorithmic}
									
   									\REQUIRE  $(N,C,\pref,b,L,w)$ \COMMENT{$\pref$ can contain weak {preferences}; if a voter $i$ expresses her preferences over a subset $C'\subset C$, then $C\setminus C'$ is considered the last equivalence class of the voter.}
   									\ENSURE $W\subseteq C$ such that $w(W)\leq L$.
   								\end{algorithmic}
   								\begin{algorithmic}[1]
									
				
   			\STATE $j\longleftarrow 1$; $W \longleftarrow \emptyset$														\WHILE{$w(W)<L$ and no other candidate can be added to $W$ without exceeding budget limit $L$}
   \FOR{$i\in N$}
   			\STATE {$A_i^{(j)}\longleftarrow \{c\in C\, :\, c\pref_i c^{(i,j)}\}$ }
   			\ENDFOR
			
   			\STATE{$C^*\longleftarrow \{c\in C\backslash W\, :\, \sum_{\{i \in N\, :\, c\in A_i^{(j)}\}} b_i \ge n\frac{w(c)}{L}\}$}

   		\IF{{$C^*=\emptyset$}}
   		\STATE $j\longleftarrow j+1$				
   									\ELSE
   									\STATE \label{step-j-approval} 
   									{ Select a candidate $c^*$ from $C^*$ and add it to $W$}
   									\STATE {$N'\longleftarrow \{i\, :\, c^*\in A_i^{(j)}\}$}
   									\STATE \label{step-reweighting} 
   						 Modify the weights of voters in $N'$ {so the total weight of voters in $N'$, i.e., $\sum_{i\in N'} b_i$,} decreases by exactly $n\frac{w(c)}{L}$.  
									
   		%
   		\ENDIF
   	\ENDWHILE
								
   									\RETURN $W$
   								\end{algorithmic}
   							\end{algorithm}

   In contrast to CPSC, we show that an IPSC outcome is not only guaranteed to exist but it can be computed in polynomial time via Algorithm~\ref{algo:newrule}{,} which we refer to as PB-EAR. The algorithm is a careful generalization of the EAR algorithm of \citet{AzLe19a}. 
   In the algorithm, $W$ is initially empty. Some most preferred candidate $c$ is selected (i.e., added into the set $W$) if it has sufficient support $n\cdot(w(c))/L$ from the voters. If $c$ is selected, then  $n\cdot(w(c))/L$ voting weight of the voters who most prefer $c$ is decreased; it does not matter which  of these voters' weight is decreased nor by how much --- so long as a total of $n\cdot(w(c))/L$ voting weight  is reduced. If no such candidate exists, candidates further down in the preference lists of all voters  are considered. 
It is clear that PB-EAR runs in polynomial time. The argument for PB-EAR satisfying IPSC does not depend on what way candidate $c^*$ is selected is Step~\ref{step-j-approval}.


   							\begin{proposition}\label{prop:PBEAR-IPSC}
   							PB-EAR  satisfies  Inclusion PSC for PB. 	
   		\end{proposition}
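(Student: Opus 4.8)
The plan is to argue by contradiction, exploiting the weight-accounting invariant that drives PB-EAR. Suppose the returned budget $W$ fails IPSC, witnessed by a voter set $N'$ solidly supporting $C'$, with $\ell := |C'|$ and $P := \{c : \exists i \in N' \text{ s.t. } c \pref_i c^{(i,\ell)}\} = \bigcup_{i\in N'} A_i^{(\ell)}$, such that $w(P\cap W) < b(N')L/n$ and there is a candidate $c^{**}\in C'\setminus(P\cap W)$ with $w(c^{**}) + w(P\cap W) \le b(N')L/n$. The first thing to establish is that $C'\subseteq A_i^{(\ell)}$ for every $i\in N'$; this is the bridge between the axiom (phrased via $c^{(i,\ell)}$) and the sets $A_i^{(\ell)}$ the algorithm actually manipulates at level $\ell$. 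I would prove it by a counting argument: if some $c'\in C'$ were strictly less preferred than $c^{(i,\ell)}$, then all of the $\ge \ell$ candidates weakly preferred to $c^{(i,\ell)}$ would be strictly preferred to $c'$, and since solid support forces every candidate strictly preferred to $c'\in C'$ to lie in $C'$, we would find $\ge \ell+1$ members of $C'$, contradicting $|C'|=\ell$. In particular $c^{**}\in A_i^{(\ell)}$ for all $i\in N'$, and $c^{**}\notin W$ (as $c^{**}\in C'\subseteq P$ but $c^{**}\notin P\cap W$).

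Next I would record the invariant underlying PB-EAR: each selected candidate $c$ decreases the total voter weight by exactly $n\,w(c)/L$ (step~\ref{step-reweighting}), so after selecting $W$ the residual total weight equals $\tfrac{n}{L}(L-w(W))$, and the loop's guard makes the output exhaustive. Fix the snapshot $T$ to be the instant the level counter is incremented from $\ell$ to $\ell+1$, or the instant of termination if the algorithm halts while $j\le\ell$. I claim that at $T$ the residual weight of $N'$ is strictly below $\tfrac{n}{L}w(c^{**})$. If the level was incremented, then $C^*=\emptyset$ at level $\ell$, so the never-selected candidate $c^{**}$ fails the selection test, i.e. $\sum_{i:\,c^{**}\in A_i^{(\ell)}} b_i < n\,w(c^{**})/L$; since $N'$ is contained in this supporter set, its residual weight is smaller still. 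If instead the algorithm terminated at level $\le\ell$, exhaustiveness gives $w(W)+w(c^{**})>L$, whence the residual total weight $\tfrac{n}{L}(L-w(W)) < \tfrac{n}{L}w(c^{**})$ already bounds that of $N'$. Either way the total weight deducted from $N'$ up to $T$ exceeds $b(N') - \tfrac{n}{L}w(c^{**})$.

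The complementary upper bound is the heart of the argument: the weight deducted from $N'$ up to $T$ is at most $\tfrac{n}{L}w(P\cap W)$. A voter $i\in N'$ is charged for a candidate $c$ only if $c\in A_i^{(j_c)}$, where $j_c$ is the level at which $c$ was picked; and every candidate selected by time $T$ was picked at some level $j_c\le\ell$, so $c\in A_i^{(j_c)}\subseteq A_i^{(\ell)}\subseteq P$, i.e. $c\in P\cap W$. Since each such $c$ contributes at most its full quota $n\,w(c)/L$ to the deductions suffered by $N'$, summing over the distinct candidates of $P\cap W$ yields the bound $\tfrac{n}{L}w(P\cap W)$. Chaining the two bounds gives $b(N')-\tfrac{n}{L}w(c^{**}) < \tfrac{n}{L}w(P\cap W)$, i.e. $w(c^{**})+w(P\cap W) > b(N')L/n$, contradicting the defining inequality of the violation.

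I expect the main obstacle to be the bookkeeping in the third step, specifically the need to freeze the analysis at the correct moment $T$. The naive statement ``only candidates in $P$ ever charge $N'$'' is \emph{false}, because once the level passes $\ell$ the voters in $N'$ may be charged for candidates they rank below position $\ell$, which can lie outside $P$. The fix is to compare the weight deducted from $N'$ up to $T$ against the residual weight of $N'$ at the \emph{same} instant $T$: these form a consistent snapshot, and at $T$ only level-$\le\ell$ selections have occurred, which are exactly the ones guaranteed to lie in $P$. Handling the two ways the snapshot can arise (level increment versus early exhaustive termination) is the only other place demanding care.
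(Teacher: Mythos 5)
Your proof is correct and follows essentially the same route as the paper's: argue by contradiction with the same weight-accounting invariant, the same snapshot (the instant the level counter passes $\ell=|C'|$, or earlier exhaustive termination), and the same key observation that up to that instant only candidates in $P\cap W$ can charge voters in $N'$. The differences are cosmetic — you make explicit the bridge claim $C'\subseteq A_i^{(\ell)}$ that the paper uses implicitly, and in the termination case you derive the contradiction against the IPSC-violation inequality rather than against exhaustiveness, which is logically equivalent.
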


			We note here that  not all IPSC outcomes are possible outcomes of PB-EAR even for the restricted setting of multi-winner voting.
%


   			%

   %

   \section{Special Focus on Multi-winner Voting}

   In this section, we dive into the well-studied setting of multi-winner voting, which is also referred to as committee voting. In this setting, $k$ candidates are to be selected from the set of candidates. 
Note that PB reduces to multi-winner voting if the weight of each candidate is 1 and the budget limit is set to $k$.  
   
 We uncover some unexpected relations between fairness concepts for this particular setting. We also show that whereas  CPCS does not give rise to a new fairness concept, IPSC gives rise to a new fairness concept even for the setting concerning approval ballots. When discussing concepts for PB, we will assume that voters have equal voter weight of 1. This will make it possible to form connections with concepts for multi-winner voting in which all the voters are typically treated equally. 

   Let us first introduce generalised PSC, which was proposed by \citet{AzLe19a} and applies to multi-winner settings with ordinal preferences.    \citet{AzLe19a} showed that generalised PSC extends the PJR concept for multi-winner voting with approval ballots.

   \begin{definition}[Generalised PSC \citep{AzLe19a}]\label{Definition: Gen PSC}
   A committee $W$ satisfies \emph{generalised PSC} if for every positive integer $\ell$, and for all generalised solid coalitions $N'$ supporting candidate subset $C'$ with size $|N'|\ge \ell n/k$, there exists a set $C^*\subseteq W$ with size at least $\min\{\ell, |C'|\}$ such that for all $c''\in C^*$, $\exists i\in N'\, : \quad c''\pref_i c^{(i, |C'|)}.$
   \end{definition}

   %
   
{In the multi-winner setting, our axioms have connections with previously studied axioms related to PSC. In particular, we show that CPSC is equivalent to generalised PSC, and IPSC implies generalised PSC. The latter result implies that IPSC is a stronger concept than CPSC. This is, perhaps, surprising given that in more general settings  CPSC appears to be a more demanding concept than IPSC because computing a CPSC outcome is NP-hard and a CPSC outcom{e} may may not exist.}


   \begin{proposition}\label{proposition: CPSC equiv to genPSC and IPSC implies genPSC}
   	For multi-winner voting, 
	\begin{description}
	\item[(i)] CPSC is equivalent to Generalised PSC. 
	\item[(ii)]  IPSC implies Generalised PSC (or CPSC) 
	\end{description}
   	\end{proposition}

   As another corollary, we note that since testing PJR is coNP-complete~\citep{AEH+18}, testing CPSC is coNP-complete.

	\subsection{Approval-based multi-winner voting}
	
	
	In this subsection, we explore our axioms in the well-studied setting of approval-based multi-winner elections. We begin by stating two established PR axioms: \emph{Proportional Justified Representation (PJR)}~\citep{SFF+17a} and \emph{Extended Justified Representation (EJR)}~\citep{ABC+16a}.
	
   \begin{definition}[PJR]
   {Suppose all voters have dichotomous preferences.} A committee $W$ with $|W|=k$ satisfies PJR for an approval ballot profile $\boldsymbol{A}=(A_1, \ldots, A_n)$ over a candidate set $C$ if for every positive integer $\ell\le k$ there does not exists a set of voters $N^*\subseteq N$ with $|N^*|\ge \ell \frac{n}{k}$ such that {the following two conditions hold:}
   \begin{description}
   \item[(i)] $\big|\bigcap_{i\in N^*} A_i\big|\ge \ell$, and
   \item[(ii)] $\big|\big(\bigcup_{i\in N^*} A_i\big)\cap W\big|<\ell.$
   \end{description}
   \end{definition}

   		\begin{table*}[h!]
   			\centering
   			\scalebox{1}{
   			\begin{tabular}{lllllll}
   				\toprule
   				\textbf{PB with}&PB with&\textbf{PB} \textbf{with}&Multi-winner&Multi-winner&Multi-winner with\\
   								\textbf{Ordinal Prefs}&Approvals&\textbf{Strict Pref}&with Ordinal Prefs &with Approvals &Strict Prefs\\
   				\midrule
   \textbf{CPSC}&BPJR-L$^{(\#)}$&\textbf{CPSC}&generalised PSC$^{(*)}$&PJR$^{(*)}$ &PSC$^{(*)}$\\
   \midrule
   \textbf{IPCS}&\textbf{IPCS} &\textbf{IPCS}& \textbf{IPCS} & \textbf{IPCS}&\textbf{IPCS}\\
   				\bottomrule
   			\end{tabular}
   			}

   			\caption{Equivalent fairness concepts for social choice settings. The concepts and settings in bold are from this paper. $(*)$ implies that, for the given social choice setting, the fairness concept is equivalent to CPSC.   $(\#)$ implies that, for the given social choice setting, the fairness concept combined with the maximal cost property is equivalent to CPSC. }
   			\label{table:summary:PBconcepts}
   		\end{table*}

		   \begin{definition}[EJR]
   {Suppose all voters have dichotomous preferences.} A committee $W$ with $|W|=k$ satisfies EJR for an approval ballot profile $\boldsymbol{A}=(A_1, \ldots, A_n)$ over a candidate set $C$ if for every positive integer $\ell\le k$ there does not exists a set of voters $N^*\subseteq N$ with $|N^*|\ge \ell \frac{n}{k}$ such that {the following two conditions hold:}
   \begin{description}
   \item[(i)] $\big|\bigcap_{i\in N^*} A_i\big|\ge \ell$, and
   \item[(ii)]  $|A_i\cap W|<\ell$  for each $i\in N^*$.
      \end{description}
   \end{definition}

Our first result is a corollary of Proposition~\ref{proposition: CPSC equiv to genPSC and IPSC implies genPSC}. It states that, in the special case of approval-based multi-winner voting, CPSC, PJR and Generalised PSC are all equivalent.
	
	\begin{corollary}\label{corollary: approval voting CPSC and PJR equiv}
   			For multi-winner voting with approval ballot, CPSC, PJR, and Generalised PSC are equivalent. 
   			\end{corollary}
   			\begin{proof}
   				\citet{AzLe19a} proved that{,} for multi-winner voting with approval ballot{,} PJR and generalised PSC are equivalent.\footnote{{Unlike the present paper,} \citeauthor{AzLe19a}'s (\citeyear{AzLe19a})  model  assumes that no voter is indifferent between all candidates; however, this assumption is not required to show the equivalence.} We have shown that{,} for multi-winner voting, CPSC and generalised PSC are equivalent. 
   				\end{proof}

			{Although the focus of the present paper has been on generalising the multi-winner PSC concept of Dummett~\citep{Dumm84a} to the PB setting, Proposition~\ref{prop: IPSC implies PJR for approval-based mw and EJR} provides a surprising discovery in the reverse direction. In the special case of approval-based multi-winner voting, IPSC is a new PSC axiom that is closely related --- albeit stronger --- than PJR.   In recent years, PJR and its related axioms have been intensely studied by the computational social choice community~\citep[{see, e.g.,}][]{AEH+18,ABC+16a,FSST17a,AEF+17b,EFSS17a}. Given Proposition~\ref{prop: IPSC implies PJR for approval-based mw and EJR} and the community's interest in PJR-like axioms, we  formalise the IPSC axiom for the approval-based multi-winner setting. We also establish a number of results that illustrate the connection between IPSC and other axioms such as PJR, EJR and PAV.}
			
   \begin{proposition}[Inclusion PSC (IPSC)  for multi-winner voting with approval preferences]\label{def: dichotomous pref  multiwinner IPSC}
Suppose voters have dichotomous preferences. A committee $W$ of size $k$ satisfies \emph{Inclusion PSC (IPSC)} if and only if  there exists no set of voters $N'\subseteq N$ such that {the following two conditions hold:}
\begin{description}
\item[(i)]    $|N'|\ge  (|\cup_{i\in N'}A_i \cap W|+1 ) n/k$, and
\item[(ii)]  there exists some $c^*\in \cap_{i\in N'}A_i \backslash (\cup_{i\in N'}A_i\cap W)$.
\end{description}
   \end{proposition}
		
		\begin{proof}
		{Follows from Proposition~\ref{prop: dichotomous pref IPSC}
 by setting $b(N')=|N'|$, $L=k$, $w(C')=|C'|$ for all $C'\subseteq C$, and simplifying.}
		\end{proof}
		
   		 %

   \begin{proposition}\label{prop: IPSC implies PJR for approval-based mw and EJR}
   	For multi-winner voting with approvals, 
	\begin{description}
	\item[(i)] IPSC implies PJR, 
	\item[(ii)]  PJR does not imply IPSC.
	\item[(iii)] IPSC and EJR are incomparable
	\end{description} 
   	\end{proposition}

   We next show that the well-studied voting rule \emph{Proportional Approval Voting (PAV)} produces a committee that satisfies IPSC. Under PAV, a voter $i$ that has $j$ of their approved candidates elected, i.e., $j=|W\cap A_i|$, is assumed to  attain  utility
 $r(j)=\sum_{p=1}^j \frac{1}{p}$ if $j>0$  and $0$ otherwise. 
   
   Given an outcome $W$, the PAV-score of $W$ is the sum of {voter}
    utilities, i.e., $\sum_{i\in N} r(|A_i\cap W|)$.    The output of PAV is an outcome {$W^*$} that has maximal PAV-score, i.e.,  
    $W^*\in \arg\max\{\sum_{i\in N} r(|A_i\cap W|) \ : \ W\subseteq C \text{ and } |W|=k\}$.

		\begin{proposition}\label{prop:pav-implies-ipsc}
PAV satisfies IPSC.
		\end{proposition}
		
%
%
		
   Given that PAV implies IPSC, the above proposition shows that EJR and IPSC are compatible axioms. This follows immediately from combining the above result with the fact that PAV also implies EJR~\citep{ABC+16a}; { however, IPSC and EJR do not characterize PAV. That is, there exists committees that satisfy both EJR and IPSC but are not PAV.}
   

   
     \begin{proposition}\label{Proposition: IPSC and EJR via PAV}
     {\color{white} a}
     \begin{description}
     \item[(i)]   IPSC and EJR are compatible. That is, there always exists a committee outcome  that satisfies both IPSC and EJR. In particular, the output of the PAV rule is such a committee.
     \item[(ii)] A committee satisfying both EJR and IPSC need not be a PAV outcome. 
     \end{description}   
   \end{proposition}



%
%


Part (ii) of  Proposition~\ref{Proposition: IPSC and EJR via PAV} is a double-edged sword. On   one hand, IPSC and EJR are insufficient in characterizing PAV. On the other hand, since PAV is computationally intractable, it suggests that computing committee outcomes that satisfy both axioms may be computationally tractable. Indeed, the following proposition proves that an outcome satisfying both axioms can be computed in polynomial-time. Interestingly, the algorithm that produces this outcome is a special case of the EAR algorithm~\citep{AzLe19a} applied to dichotomous preferences. The algorithm in question is studied by \citet{PeSk20} who call it ``Rule-X.''

\begin{proposition}
A committee satisfying both EJR and IPSC can be computed via a polynomial-time algorithm. 
\end{proposition}


Finally, we conclude by noting that there is no ranking that can be applied to EJR and IPSC in terms of PAV scores. That is, there are instances where an IPSC outcome provides higher PAV-score than an EJR outcome and vice-versa.

To summarize the results of this subsection, we provide a schematic illustration of the relationship between our axioms, PJR, EJR and PAV in Figure~\ref{fig:approvalaxioms}.


\begin{center}
			\begin{figure}[H]
				\centering
			\begin{tikzpicture}[scale=0.50][thick]
   \draw (0,0) circle (2.5cm);
   \draw (3,0) circle (2.5cm);
      \draw (1.5,0.6) circle (4.25cm);
         \draw (1.5,3) node {PJR$\equiv$CPSC$\equiv$Gen-PSC};
   \draw (-1,0) node {EJR};
   \draw (4,0) node {IPSC};
         \draw (1.5,0) circle (0.8cm);
            \draw (1.5,0) node {PAV};
\end{tikzpicture}
\caption{Schematic illustration of PJR, EJR, IPSC, CPSC and PAV for the approval-based multi-winner setting.}
\label{fig:approvalaxioms}
\end{figure}
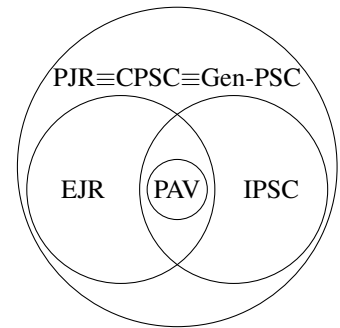
\end{center}

      		\section{Conclusion} 
			
In this paper, we undertook a formal study of PB with ordinal preferences.  Table~\ref{table:summary:PBconcepts} summarizes how some of the concepts are equivalent to each other in particular settings. 
		    We propose two axioms (CPSC and IPSC) that capture  important aspects of the proportional representation. One of the concepts (IPSC) leads to a new  concept even for the restricted setting of multi-winner voting.
{If voters have additive separable utilities over projects, the cardinal utility information can be used to derive the underlying ordinal preferences. Therefore, our axioms and rules also apply to settings where voters have additive separable utilities over projects. 			The formal study of PB from a (computational) social choice perspective is still in its infancy. We envisage further work on axioms and algorithms for fair participatory budgeting.}

\bibliographystyle{aaai}
 

\bibliography{../../adtbib/abb.bib,../../adtbib/adt.bib,../../adtbib/aziz.bib}

\section*{Appendix}

			\subsection*{Proof of Proposition~\ref{prop: CPSC and IPSC differ}}

   %
   
   We prove the statement by two examples. 
   
   \begin{example}[IPSC does not imply CPSC.]
	   			First, we show that IPSC does not imply CPSC. Let $b_i = 1, L=2, C=\{a,b,c\}$ with $w(a)=w(c)=1$ and $w(b)=0.9$, and suppose that voters have dichotomous preferences:
	   			\begin{align*}
	   			1&: \quad  \{a,b\}, \\
	   			2&: \quad  \{a\}, \\
	   			3,4&: \quad  \{c\}.
	   			\end{align*}
	   		Consider the outcome $W=\{c,b\}$. This does not satisfy CPSC since the set of voters $N'=\{1,2\}$ is a generalised solid coalition for $C'=\{a\}$ with 
	   			$w(\{c\midd \exists i\in N' \text{ s.t }c \pref_i c^{(i, |C'|)} \}\cap W)=w(\{b\})=0.9<b(N')L/n=1,$
	   			and, yet, $C''=\{a\}\subseteq C'$ such that $w(C'')=1$. On the other hand, $W$ satisfies IPSC. For example, take $N'$ and $C'$ as above, there is a single candidate $a\in C'\backslash  \{c\midd \exists i\in N' \text{ s.t }c \pref_i c^{(i, |C'|)} \}\cap W$ and $w(\{a\}\cup \{c\midd \exists i\in N' \text{ s.t }c \pref_i c^{(i, |C'|)} \}\cap W)=w(\{a,b\})>1$. Thus, IPSC is not violated by the set of voters $N'$ and solid coalition $C'$. It can similarly be shown that for all other subsets of voters and sets of solidly supported candidates that IPSC is not violated. \hfill $\diamond$
	   \end{example}
	   
	   \begin{example}[CPSC does not imply IPSC]
		   			Second, we show that CPSC does not imply IPSC. Let $b_i = 1, L=2, C=\{a,b,c,d,y,z\}$ with $w(a)=w(y)=w(d)=2.1, w(b)=0.1, w(c)=0.9, w(z)=1.1$, and suppose that the voters' preferences are
		   			\begin{align*}
		   			1&: \quad  a, \{b,c\},z,d, y\\
		   			2&: \quad  b,\{a,d\},y,c,z \\
		   			3,4&: \quad  z, y, d, c, b,a   
		   			\end{align*}
		   		Consider  the outcome $W=\{c,z\}$. This does not satisfy IPSC. The set of voters $N'=\{1,2\}$ forms a generalised solid coalition for $C'=\{a,b\}$ and 
		   		$w(\{c\midd \exists i\in N' \text{ s.t }c \pref_i c^{(i, |C'|)} \}\cap W)=w(\{c\})=0.9<b(N')L/n=1.$
		   						However, the candidate $b\in C'\backslash \{c\midd \exists i\in N' \text{ s.t }c \pref_i c^{(i, |C'|)} \}\cap W$ and {$w(\{b,c\})=1$}. Thus, IPSC is violated. On the other hand, $W$ satisfies CPSC. For example, take $N'$ and $C'$ as above,  there is only one subset $C''=\{b\}\subseteq C'$  that does not exceed $b(N')L/n=1$. However, 
		   			$w(\{c\midd \exists i\in N' \text{ s.t }c \pref_i c^{(i, |C'|)} \}\cap W)=w(\{c\})=0.9 \ge w(\{b\})=0.1.$
		   			Thus, CPSC is not violated by the set of voters $N'$ and solid coalition $C'$. It can similarly be shown that for all other subsets of voters and sets of solidly supported candidates that CPSC is not violated. \hfill $\diamond$
		   \end{example}
	
   
\subsection*{Proof of Proposition~\ref{proposition: CPSC and IPSC exhaustive}}

   \begin{proof}
   Let $W$ be a non-exhaustive outcome. That is, there exists a candidate $c^*\in C\backslash W$ such that $w(W\cup \{c^*\})\le L$. 

   Later, in Proposition~\ref{proposition: CPSC maximally exhaustive}, we prove the stronger result that a CPSC outcome is always a maximal cost outcome. Thus, we omit the proof that a CPSC outcome is exhaustive. 


   For the sake of a contradiction, suppose that $W$ satisfies IPSC. The set of all voters $N$ solidly supports the entire candidate set $C$, $b(N)L/n=L$, and 
   $$ w(\{c\midd \exists i\in N' \text{ s.t }c \pref_i c^{(i, |C'|)} \}\cap W)=w(W)<L.$$
    Definition~\ref{definition: IPSC for PB} is violated since $c^* \in C\backslash (\{c\midd \exists i\in N' \text{ s.t }c \pref_i c^{(i, |C'|)}\}\cap W)$ and 
   $w(c^*\cup (\{c\midd \exists i\in N' \text{ s.t }c \pref_i c^{(i, |C'|)} \}\cap W))=w(W\cup \{c^*\})\leq b(N')L/n=L.$
   This is the desired contradiction.
   \end{proof}

   \subsection*{Proof of Proposition~\ref{proposition: CPSC maximally exhaustive}}
   

   \begin{proof}
   Suppose that $W$ and $W'$ are two distinct budgets that satisfy CPSC and assume $w(W)<w(W')\le L$. We prove that $W$ cannot satisfy CPSC. The set of all voters $N$ is a solid coalition for the entire candidate set $C$. Take $C''=W'$. We have $w(W')\le b(N')L/n=L$ and 
   $$w(\{c\midd \exists i\in N \text{ s.t }c \pref_i c^{(i, |C|)} \}\cap W)=w(W)< w(C'')=w(W').$$
   Thus, $W$ does not satisfy CPSC. 
   \end{proof}

    \subsection*{Proof of Proposition~\ref{prop: dichotomous pref CPSC}}



   \begin{proof}
   ($\Rightarrow$) We  prove the result via the contrapositive. Suppose that an outcome $W$ does not (simultaneously) satisfy (i) and (ii). If (ii) does not hold, then, by Proposition~\ref{proposition: CPSC maximally exhaustive}, CPSC does not hold.  Now, suppose that (ii) holds but (i) does not. That is,  $W$ is maximal cost and there exists $N'$ such that there exists $C''\subseteq \bigcap_{i\in N'} A_i$ with $w(C'')\le b(N')L/n$ and 
   \begin{align}\label{equation: CPSC for approvals0}
   w(W\cap \cup_{i\in N'} A_i)<w(C''). 
   \end{align}
   Since $C''\subseteq  \bigcap_{i\in N'} A_i$, the set of voters $N'$ forms a generalised solid coalition for $C''$ and 
   \begin{align}\label{equation: CPSC for approvals}
   \{c\midd c \pref_i c^{(i, |C''|)} \} &=A_i && \text{ for all $i\in N'$}.
   \end{align}
   Further, the (trivial) subset $C''$ is such that $w(C'')\le b(N')L/n$ and 
   $w(\{c\midd \exists i\in N' \text{ s.t }c \pref_i c^{(i, |C''|)} \}\cap W)=w(\cup_{i\in N'} A_i \cap  W)<w(C''),$
    by (\ref{equation: CPSC for approvals0}). Thus, CPSC is violated. 
   %
   %

   ($\Leftarrow$) We  prove the result via the contrapositive. Suppose that $W$ does not satisfy CPSC. If $W$ is not maximal cost, then (ii)  is violated and we are done. Assume that $W$ is  maximal cost but does not  satisfy CPSC. That is, $W$ is maximal cost and there exists  a set of voters $N'$ that solidly supports $C'$ such that $C''\subseteq C'$ and $w(C'')\le b(N')L/n$ but  
   \begin{align}\label{equation: CPSC for approvals 1}
   w(\{c\midd \exists i\in N' \text{ s.t }c \pref_i c^{(i, |C'|)} \}\cap W)<w(C'').
   \end{align}

   Now, suppose that, for some $i\in N'$,
   $$\{c\midd  c \pref_i c^{(i, |C'|)}\}\neq A_i.$$
   This can only occur if $|C'|>|A_i|$ or $|A_i|=0$. In both cases, this implies that $\{c\midd  c \pref_i c^{(i, |C'|)}\}=C$ and
   $$w(\{c\midd \exists i\in N' \text{ s.t }c \pref_i c^{(i, |C'|)} \}\cap W)=w(C\cap W)=w(W).$$
   But this is a contradiction since, combined with  (\ref{equation: CPSC for approvals 1}), this shows that $W$ cannot be a maximal cost outcome. Thus, it must be that, for all $i\in N'$,
   $$\{c\midd  c \pref_i c^{(i, |C'|)}\}= A_i,$$
   and the solidly supported candidate set $C'$ is a subset of $\bigcap_{i\in N'} A_i$. It follows that $C'' \subseteq C'$ is also a subset of   $\bigcap_{i\in N'} A_i$ such that $w(C'') \le b(N')L/n$ and 
   $$w(\{c\midd \exists i\in N' \text{ s.t }c \pref_i c^{(i, |C'|)} \}\cap W)=w(W\cap \cup_{i\in N'} A_i)<w(C''),$$
   by (\ref{equation: CPSC for approvals 1}). Thus, (ii) is violated; this completes the proof. 
   %
   %
   %
   %
   \end{proof}

   \subsection*{Proof of Proposition~\ref{prop: dichotomous pref IPSC}}


   \begin{proof}
   ($\Rightarrow$)   We  prove the result using the contrapositive. If (ii) does not hold, then, by Proposition~\ref{proposition: CPSC and IPSC exhaustive}, we see that IPSC is violated. Now, assume that (ii) holds but (i) does not hold. That is, $W$ is an exhaustive outcome, and there exists a set of voters $N'\subseteq N$ with
   \begin{align}\label{equation: approval IPSC eq1}
   w(\cup_{i\in N'}A_i \cap W)< b(N')L/n
   \end{align}
   and  some $c^*\in (\cap_{i\in N'}A_i)\setminus (\cup_{i\in N'}A_i\cap W)$ such that 
   \begin{align}\label{equation: approval IPSC eq2}
   w(\{c^*\} \cup (\cup_{i\in N'}A_i \cap W))\leq b(N')L/n.
   \end{align} 

   Let $C'=\bigcap_{i\in N'} A_i$. The set $C'$ is solidly supported by the set of voters $N'$ and, since $|C'|\le |A_i|$ for all $i\in N'$, we have 
   $$\{c \ : \ \ s.t. \ c\pref_i c^{(i, |C'|)}\}=A_i$$
   for all $i\in N'$. It then follows from (\ref{equation: approval IPSC eq1}) that
   $w(\{c \ : \ \exists i\in N' \ s.t. \ c\pref_i c^{(i, |C'|)}\}\cap W)=w(\cup_{i\in N'}A_i \cap W)<b(N')L/n,$
   and, yet, by (\ref{equation: approval IPSC eq2}) there exists a candidate 
   $c^*\in C'\backslash  \{c \ : \ \exists i\in N' \ s.t. \ c\pref_i c^{(i, |C'|)}\}\cap W=\cap_{i\in N'} A_i\backslash (\cup_{i\in N'}A_i \cap W)$
   such that 
   $w(c^*\cup \{c \ : \ \exists i\in N' \ s.t. \ c\pref_i c^{(i, |C'|)}\}\cap W)=w(c^*\cup \cup_{i\in N'}A_i \cap W)\le b(N')L/n.$
   That is, IPSC is violated. 

   %
   %
   %
   %
   %

   ($\Leftarrow$) We prove the result via the contrapositive. Suppose that  $W$ is an  outcome such that IPSC does not hold. If $W$ is not exhaustive, then (ii) is violated and we are done. Now, suppose the $W$ is exhaustive and does not satisfy IPSC. That is, $W$ is exhaustive, and there exists a set of voters $N'\subseteq N$ 
   who  solidly support a set of candidates $C'$ with 
   \begin{align}\label{equation: approval IPSC eq3}
   w(\{c\midd \exists i\in N' \text{ s.t }c \pref_i c^{(i, |C'|)} \}\cap W) <b(N')L/n
   \end{align}
    and there exists some candidate $c^*\in C' \setminus (\{c\midd \exists i\in N' \text{ s.t }c \pref_i c^{(i, |C'|)} \}\cap W)$ such that 
    \begin{align}\label{equation: approval IPSC eq4}
    w(c^*\cup (\{c\midd \exists i\in N' \text{ s.t }c \pref_i c^{(i, |C'|)} \}\cap W))\leq b(N')L/n.
    \end{align}
 
    First, suppose that, for some $i\in N'$, 
    $\{c \ : \ \ s.t. \ c\pref_i c^{(i, |C'|)}\}\neq A_i.$
   This can only occur if $|C'|>|A_i|$ or $|A_i|=0$. In either case, this implies that $\{c \ : \ \ s.t. \ c\pref_i c^{(i, |C'|)}\}=C$ and
   $ w(c^*\cup (\{c\midd \exists i\in N' \text{ s.t }c \pref_i c^{(i, |C'|)} \}\cap W))=w(c^*\cup W) \leq b(N')L/n\le L$
   for some $c^*\notin W$; but this contradicts the assumption that $W$ is exhaustive. Thus, it must be that 
     $\{c \ : \ \ s.t. \ c\pref_i c^{(i, |C'|)}\}= A_i,$
     for all $i\in N'$. It then follows that
   $$\{c\midd \exists i\in N' \text{ s.t }c \pref_i c^{(i, |C'|)} \}=\cup_{i\in N'} A_i,$$
   and, by (\ref{equation: approval IPSC eq3}), 
   $$w(\cup_{i\in N'} A_i \cap W)<b(N')L/n.$$
   Further, the candidate subset $C'$ must correspond to a subset of $\bigcup_{i\in N'} A_i$. Thus, the candidate $c^* \in (\cap_{i\in N'}A_i)\setminus (\cup_{i\in N'}A_i\cap W)$ and 
   $$w(\{c^*\}  \cup (\cup_{i\in N'}A_i\cap W))\le b(N')L/n.$$
    Thus, condition (i) is violated. 
   \end{proof}

    \subsection*{Proof of Proposition~\ref{prop:PBEAR-IPSC}}
   
		
   		\begin{proof}
   Let $W$ be an outcome of PB-EAR. For  sake of a contradiction, suppose that $W$ does not satisfy Inc-PSC. That is, there exists a set of voters $N'$ who solidly support a candidate set $C'$ such that 
   $$w(\bar{C}'\cap W)<b(N') L/n,$$
   where $\bar{C}':=\{c\midd \exists i\in N' \text{ s.t }c \pref_i c^{(i, |C'|)} \}$, and there exists a candidate $c^*\in C'\backslash (\bar{C'}\cap W)$ such that
   \begin{align}\label{equation: pb-ear incl-psc eq1}
   w\Big(c^*\cup (\bar{C}'\cap W)\Big)\le b(N')L/n.
   \end{align}
   We will denote $\bar{C}'\cap W$ by $W'$.

   First, suppose the PB-EAR terminated at some $j>|C'|$ iteration. At the end of the $j=|C'|$ iteration, the sum of voter weights in $N'$ is at least 
   $$b(N')-\sum_{c\in W'} w(c) n/L=b(N')- w(W') n /L.$$
   This follows because when each candidate is added to $W$ a total weight of $w(c) n /L$ is subtracted from the set of voters supporting this candidate. Our lower bound is attained by assuming that every candidate $c\in W$ that can possibly reduce the weight of voters in $N'$ (i.e., those candidates $c\in c^{(i, |C'|)}$ for $i\in N'$) subtracts the entire weight $w(c)/L$ from the voter set $N'$.

   But (\ref{equation: pb-ear incl-psc eq1}) implies that $w(W')\le  b(N')L/n-w(c^*)$, and so 
   $$b(N')-\sum_{c\in W'} w(c)n /L\ge w(c^*) n /L.$$
   Thus,  at the end of the  $j=|C'|$ iteration
   $$\sum_{i\in N \ : \ c^* \in A_i^{(j)}} b_i \ge n w(c^*)/L,$$
   which implies that $c^*\in C^*$ and $C^*\neq \emptyset$. This is contradiction since no other candidates from $\bar{C}'$ are contained in $W$ besides those already accounted for in $W'$, and so PB-EAR could not have iterated to the $j+1$-th stage. 

   Second, suppose that PB-EAR terminated at some $j\le |C'|$ iteration. This can only occur if $w(W)<L$ and no candidate can be added without exceeding the budget or $w(W)=L$.  The total voter weight that has been subtracted (from all voters)  via the algorithm is exactly
   \begin{align}\label{equation: pb-ear incl-psc eq2}
   \sum_{c\in W} w(c) n/L=w(W)n/L.
   \end{align}
   As noted in the above paragraphs, the voter weights of $N'$ is at least $n w(c^*)/L$ {and, hence, at most $n-n w(c^*)/L$ voter weight has been decreased from all voters. }   This gives an upper bound on the total voter weight that has been decreased from all voters {(\ref{equation: pb-ear incl-psc eq2}):}
   \begin{align}\label{equation: pb-ear incl-psc eq3}
  {w(W)n/L\le } n-n w(c^*)/L=[L-w(c^*)]n/L.
   \end{align} 
   {Simplifying  (\ref{equation: pb-ear incl-psc eq3}) gives}
   $$w(W)n/L \le [L-w(c^*)]n/L \implies w(W)\le L-w(c^*).$$
   That is, $w(W)+w(c^*)\le L$. This is a contradiction since the set $W$ does not equal $L$ and contains at least one candidate, namely $c^*$, such that $W\cup\{c^*\}$ does not exceed $L$. 
   			\end{proof}


			\subsection*{Proof of Proposition~\ref{proposition: CPSC equiv to genPSC and IPSC implies genPSC}}
		    	\begin{proof}
		   We begin with statement (i). Suppose $W$ does not satisfy CPSC.		
		    Then, there exists a set of candidates $C'$ solidly supported by $N'$,  
		    for which there is some  subset of candidates $C''\subseteq C'$ such that $|C''|\leq |N'|k/n$ but 
		    \begin{align}\label{equation: mw CPSC and genPSC eq}
		    |\{c\midd \exists i\in N' \text{ s.t }c \pref_i c^{(i, |C'|)} \}\cap W|< |C''|.
		    \end{align}

		    To show that generalised PSC does not hold, take $\ell=|C''|$, and notice that $N'$ is a generalised solid coalition for $C'$ with $|N'|\ge |C''|n/k=\ell n/k$. We  wish to show that there is no subset $C^*\subseteq W$ of size at least $\min\{\ell, |C'|\}=\ell$ such that for all $c''\in C^*$
		    $$\exists i\in N'\, : \quad c''\pref_i c^{(i, |C'|)}.$$ 
		    If such a $C^*$ set did exist, then  it must be that 
		    \begin{align*}
		    |\{c\midd \exists i\in N' \text{ s.t }c \pref_i c^{(i, |C'|)} \}\cap W| &\ge  |C^*|\ge \ell=|C''|,
		    \end{align*}
		    which contradicts (\ref{equation: mw CPSC and genPSC eq}). Therefore, no such set can exist and generalised PSC is violated. 

		    %
		    %

		    Suppose $W$ does not satisfy generalised PSC. Then, for some positive integer $\ell$, there exists a generalised solid coalition $N'$ supporting candidate subset $C'$ such  $|N'|\ge \ell n/k$  and there does not exist any subset $C^*\subseteq W \ : \ |C^*|\ge \min\{\ell, |C'|\}$  such that for all $c''\in C^*$
		    $$\exists i\in N'\, : \quad c''\pref_i c^{(i, |C'|)}.$$
		    This implies that  
		    \begin{align}\label{equation: mw CPSC and genPSC eq2}
		    |\{c\midd \exists i\in N' \text{ s.t }c \pref_i c^{(i, |C'|)} \}\cap W| < \min\{\ell, |C'|\}.
		    \end{align}

		    We now show that CPSC is violated. The set of voters $N'$ solidly supports $C'$. Let $C''\subseteq C'$ be any subset such that $|C''|=\min\{\ell, |C'|\}$. It follows that $|C''|\le \ell \le |N'|k/n$. However, from (\ref{equation: mw CPSC and genPSC eq2}) we have 
		    $$|\{c\midd \exists i\in N' \text{ s.t }c \pref_i c^{(i, |C'|)} \}\cap W| <\min\{\ell, |C'|\} =|C''|,$$
		    which is a violation of CPSC.
		    %
		    %
		    %
		    %
%
%

We now prove statement (ii). Suppose $W$ violates CPSC. Then, there exists a set of candidates $C'$ solidly supported by $N'$,  
				    for which there is a subset of candidates $C''\subseteq C'$ such that $|C''|\leq |N'|k/n$ but 
				    \begin{align}\label{equation: mw IPSC implies CPSC eq}
				    |\{c\midd \exists i\in N' \text{ s.t }c \pref_i c^{(i, |C'|)} \}\cap W|< |C''|.
				    \end{align} In the multi-winner setting, $w(c)=1$ for all $c\in C$ and so 
				    $$|\{c\midd \exists i\in N' \text{ s.t }c \pref_i c^{(i, |C'|)} \}\cap W|\le |C''|-1.$$
				    Now let $c^*$ be some candidate $c^*\in C'' \setminus (\{c\midd \exists i\in N' \text{ s.t }c \pref_i c^{(i, |C'|)} \}\cap W)$, such a candidate must exist by (\ref{equation: mw IPSC implies CPSC eq}). But then
				    $$|c^*\cup (\{c\midd \exists i\in N' \text{ s.t }c \pref_i c^{(i, |C'|)} \}\cap W)|\le |C''|\leq |N'|k/n,$$ 
				    and IPSC is violated. Therefore in multi-winner voting, IPSC implies Generalised PSC (or CPSC).  
				    		\end{proof}

	    \subsection*{Proof of Proposition~\ref{prop: IPSC implies PJR for approval-based mw and EJR}}

	
   	\begin{proof}
	Statement (i) follows immediately from statement (ii) in Proposition~\ref{proposition: CPSC equiv to genPSC and IPSC implies genPSC}  and Corollary~\ref{corollary: approval voting CPSC and PJR equiv}.
	
%

   		For statement (ii), we show that PJR does not imply IPSC. 
   		\begin{align*}
   			1-3:&\quad \{a,x\}\\
   			4-6:&\quad \{a,y\}\\
   			7-12: &\quad \{u,v,w,x,y,z\}
   			\end{align*}
		
   			Consider outcome $W=\{u,v,w,x,y,z\}$ for $k=6$. Then consider the set $N'=\{1,2,3,4,5,6\}$. PJR is not violated and hence $W$ satisfies PJR. However, IPSC is violated and so $W$ is not IPSC.   

		
%
%
%
%
		  			Finally, we prove statement (iii). We begin by showing that an EJR committee need not be IPSC. This follows from the example given in the  statement (ii) 
					where the outcome $W$ satisfies EJR but does not satisfy IPSC.

		   Second, we show that an IPSC committee need not be EJR. Consider the following example where $n=6, k=3$, and voter preferences are as below.
				\begin{align*}
				1-2:&\quad \{a,b,c\}\\
				3-4:&\quad \{a,b,d\}\\
				5-6:&\quad \{z\}.
				\end{align*}
				The outcome $W=\{c,d,z\}$ does not satisfy EJR. This follows since $N'=\{1,2,3,4\}$ is such that $|N'|\ge 2 (n/k)=4$ and $|\cap_{i\in N'}A_i|\ge 2$, yet $|A_i\cap W|=1<2$ for each $i\in N'$. On the other hand, $W$ does satisfy IPSC. This is because the solid coalition $N'$ is not sufficiently large enough to meet condition (i) of the definition since $ (|\cup_{i\in N'}A_i \cap W|+1 ) n/k=6>4=|N'|$.
		   \end{proof}

	    \subsection*{Proof of Proposition~\ref{prop:pav-implies-ipsc}}

		
		   		\begin{proof}
		   Let $W$ be the output of PAV. Suppose for the sake of a contradiction that there exists $N'$ satisfying (i) and (ii).

		   			For each candidate $w\in W$, we define its marginal contribution as the difference between the PAV score of $W$ and $W\backslash \{w\}$. Let $m(W)$ be the sum of marginal differences of all candidates in $W$. Note that if $c^*$ were added to $W$, then the PAV score would increase by at least
		   		$$|N'|\frac{1}{|\cup_{i\in N'}A_i \cap W|+1 }\ge n/k.$$
		   		Thus, it suffices to prove that the marginal contribution of some candidate in $W$ is less than $n/k$.

		   		Consider the set $N\backslash N'$. We have $|N\backslash N'|\le n-n/k=n/k(k-1)$. Pick a voter $i\in N\backslash N'$ and let $j=|A_i\cap W|$. If $j>0$, then this voter contributes exactly $1/j$ to the marginal contribution of each candidate in $A_i\cap W$, and hence her contribution to $m(W)$ is exactly 1. If $j=0$, this voter does not contribute to $m(W)$ at all. Therefore, we have that $m(W)\le |N\backslash N'|\le n/k(k-1)$. Applying the pigeonhole principle, we see that there exists some candidate $w\in W$ with marginal contribution less than $n/k$. This complete the proof.
		   		\end{proof}

		   	    \subsection*{Proof of Proposition~\ref{Proposition: IPSC and EJR via PAV}}


\begin{proof}
The proof of statement (i) follows immediately from Proposition~\ref{prop:pav-implies-ipsc} and the fact that PAV also implies EJR~\citep{ABC+16a}.

We now prove statement (ii). Consider the example from  the proof of statement (ii) in Proposition~\ref{prop: IPSC implies PJR for approval-based mw and EJR}; we will show that the outcome $W'=\{a,u,v,w,x,z\}$ satisfies both EJR and IPSC but is not PAV. 

We begin with EJR. Note that the solid coalition $N'=\{1, 2, \ldots, 6\}$  has size 6 and their approval sets  have an intersection of size 1. Yet, the each voter in $N'$ has at least one of their approved candidates elected.  IPSC is also satisfied. Again consider $N'$, these voters have size 6 and are sufficiently large since $6\ge(|\cup_{i\in N'}A_i \cap W'|+1 ) n/k$; however, there does not exist any candidate in their intersection that is not already elected. Thus both EJR and IPSC are satisfied by $W'$. 

It only remains to prove that $W'$ is not PAV. This is straightforward. The PAV score of $W'$ is 
$$3(1+1/2)+3+6(1+1/2+\cdots+1/5)=21.2.$$
This cannot be the PAV outcome since $W^*=\{a,v,w,x,y,z\}$ has PAV score of 
$$3(1+1/2)+3(1+1/2)+6(1+1/2+\cdots+1/5)=22.7.$$
We note, as an aside, that it is, of course, true that $W^*$ also satisfies both  EJR and IPSC (as per statement (i)).
\end{proof}

\end{document}